\newtheorem{theorem}{Theorem}[section]
\newtheorem{lemma}[theorem]{Lemma}
\newtheorem{claim}[theorem]{Claim}
\newtheorem{definition}[theorem]{Definition}
\newtheorem{coro}[theorem]{Corollary}
\newcommand{\mainmechanism}{{\sc Pruning-Lifting Mechanism}}
\newcommand{\pathmechanism}{{\sc Pruning-Lifting $k$-Paths Mechanism}}
\newcommand{\calM}{{\cal M}}
\newcommand{\calE}{{\cal E}}
\newcommand{\calF}{{\cal F}}
\newcommand{\calH}{{\cal H}}
\newcommand{\calS}{{\cal S}}
\newcommand{\calG}{{\cal G}}
\newcommand{\CL}{{\cal CL}}
\newcommand{\vecc}{{\mathbf c}}
\newcommand{\vecb}{{\mathbf b}}
\newcommand{\vecw}{{\mathbf w}}
\newcommand{\vecy}{{\mathbf y}}
\title{\textbf{Frugal Mechanism Design via Spectral Techniques}}
\author
{\\Ning Chen\footnote{Division of Mathematical Sciences,
School of Physical and Mathematical Sciences,
Nanyang Technological University, Singapore.
Email: {\tt ningc@ntu.edu.sg, eelkind@ntu.edu.sg, ngravin@pmail.ntu.edu.sg.}}
\and \\Edith Elkind\footnotemark[1]
\and \\Nick Gravin\footnotemark[1] \footnotemark[2]
\and \\Fedor Petrov\footnote{St.Petersburg Department of Steklov Mathematical
Institute RAS, Russia.}
}
\date{}
\begin{document}

\maketitle
\thispagestyle{empty}

\begin{abstract}
We study the design of truthful mechanisms for set systems, i.e.,
scenarios where a customer needs to
hire a team of agents to perform a complex task. In this setting,
frugality~\cite{archer} provides a measure to evaluate
the ``cost of truthfulness'', that is, the overpayment of
a truthful mechanism relative to the ``fair'' payment.

We propose a uniform scheme for
designing frugal truthful mechanisms for general set systems.
Our scheme is based on scaling the agents' bids using the eigenvector of
a matrix that encodes the interdependencies between the agents.
We demonstrate that the $r$-out-of-$k$-system mechanism and
the $^{\sqrt{\ }}$-mechanism for buying a path in a graph~\cite{anna}
can be viewed as instantiations of our scheme. We then apply our scheme
to two other classes of set systems, namely,
vertex cover systems~\cite{calinescu,edith3} and
$k$-path systems, in which a
customer needs to purchase $k$ edge-disjoint source-sink paths.
For both settings, we bound the frugality of our mechanism in terms of the
largest eigenvalue of the respective interdependency matrix.

We show that our mechanism is optimal for a large subclass of vertex cover systems
satisfying a simple local sparsity condition. For $k$-path systems, while our mechanism is within a factor of $k+1$
from optimal, we show that it is, in fact, {\em optimal}, when one uses
a modified definition of frugality proposed in~\cite{edith3}.
Our lower bound argument combines spectral
techniques and Young's inequality, and is applicable to
all set systems. As both $r$-out-of-$k$ systems and single path systems
can be viewed as special cases of $k$-path systems,
our result improves the lower bounds of~\cite{anna}
and answers several open questions proposed in~\cite{anna}.
\end{abstract}

\newpage

\section{Introduction}

Consider a scenario where a customer wishes to purchase the rights to have data routed on his
behalf from a source $s$ to a destination $t$ in a network where each edge
is owned by a selfishly motivated agent. Each agent incurs a privately known
cost if the data is routed through his edge, and wants to be compensated
for this cost, and, if possible, make a profit. The customer
needs to decide which edges to buy, and wants to minimize his total expense.

This problem is a special case of the \textit{hiring-a-team}
problem~\cite{talwar,anna,yokoo,ning,edith3}: Given a set of agents $\calE$,
a customer wishes to hire a team of agents capable of performing a certain complex task
on his behalf. A subset $S\subseteq \calE$ is said to be \textit{feasible}
if the  agents in $S$ can jointly perform the complex task.
This scenario can be described by a \textit{set system} $(\calE, \calF)$,
where $\calE$ is the set of agents and $\calF$ is the collection of feasible sets.
Each agent $e\in \calE$ can perform a simple task at a privately known cost $c(e)$.
In such environments, a natural way to make the hiring decisions is by
means of \textit{mechanisms} ---
Each agent $e$ submits a \textit{bid} $b(e)$, i.e., the payment that he wants to
receive, and based on these bids
the customer selects a feasible set $S\in\calF$ (the set of \textit{winners}),
and determines the payment to each agent in $S$.

A desirable property of mechanisms is that of \textit{truthfulness}:
It should be in the best interest of every agent $e$ to bid his true cost, i.e.
to set $b(e)=c(e)$ no matter what bids other agents submit;
that is, truth-telling should be a dominant strategy for every agent.
Truthfulness is a strong and very appealing concept: it obviates the need for agents to
perform complex strategic computations, even if they do not know the costs and strategies
of others. This property is especially important in the Internet and electronic commerce
settings, as most protocols are executed instantly.

One of the most celebrated truthful designs is the VCG mechanism~\cite{vickrey,clarke,groves},
where the feasible set with the smallest total bid wins, and the payment to each
agent $e$ in the winning set is his threshold bid, i.e., the highest value
that $e$ could have bid to still be part of a winning set.
While VCG mechanism is truthful, on the negative side,
it can make the customer pay far more than the true cost of the winning set,
or even the cheapest alternative, as illustrated by the following example:
There are two parallel paths $P_1$ and $P_2$ from $s$ to $t$, $P_1$ has one edge with
cost 1 and $P_2$ has $n$ edges with cost 0 each.
VCG selects $P_2$ as the winning path and pays 1 to every edge in $P_2$.
Hence, the total payment of VCG is $n$, the number of edges in $P_2$, which is far more
than the total cost of both $P_1$ and $P_2$.

The VCG overpayment property illustrated above is clearly undesirable
from the customer's perspective, and thus
motivates the search for truthful mechanisms that are \textit{frugal},
i.e., select a feasible set and induce truthful cost revelation without
resulting in high overpayment. However, formalizing the notion of frugality
is a challenging problem, as it is not immediately clear what the payment of
a mechanism should be compared to. A natural candidate for this benchmark
is the total cost of the closest competitor, i.e., the cost of the cheapest
feasible set among those that are disjoint from the winning set.
This definition coincides with the second highest bid in
single-item auctions and has been used in, e.g.,~\cite{archer1,archer,talwar,edith}.
However, as observed by Karlin, Kempe and Tamir~\cite{anna}, such
feasible set may not exist at all, even in monopoly-free set systems
(i.e., set systems where no agent appears in all feasible sets). To deal with this
problem, \cite{anna} proposed an alternative benchmark, which is bounded
for any monopoly-free set system and is closely related to the buyer-optimal
Nash equilibrium of first-price auctions (see Definition~\ref{def-benchmark}).
Nash equilibrium corresponds to a stable outcome of the bargaining process, and
therefore provides a natural lower bound on the total payment of any
dominant strategy mechanism. Throughout the paper,
we use the benchmark of~\cite{anna}, as well as its somewhat more relaxed
variant suggested in~\cite{edith3} to study frugality of truthful mechanisms.

\subsection{Our Results}

\paragraph{Uniform Frugal Truthful Mechanisms.} We propose a uniform scheme,
which we call \mainmechanism,
to design frugal truthful mechanisms for set systems. At a high-level
view, this mechanism consists of two key steps: pruning and lifting.
\begin{itemize}
\item Pruning. In a general set system, the relationships among the agents
can be arbitrarily complicated. Thus, in the pruning step, we remove agents from the system
so as to expose the structure of the competition. Intuitively, the goal is to keep
only the agents who are going to play a role in determining the bids in Nash equilibrium;
this enables us to compare the payoffs of our mechanism to the total equilibrium payment.
Since we decide which agents to prune based on their bids, we have to make
our choices carefully so as to preserve truthfulness.

\item Lifting. The goal of the lifting process is to ``lift" the bid of each remaining agent
so as to take into account the size of each feasible set. For this purpose, we use
a graph-theoretic approach inspired by the ideas in~\cite{anna}.
Namely, we construct a graph $\calH$ whose vertices are agents, and there is an edge between
two agents $e$ and $e'$ if removing both $e$ and $e'$ results in a system
with no feasible solution. We call $\calH$ the dependency graph of the pruned system.
We then compute the largest eigenvalue of $\calH$
(or, more precisely, the maximum of the largest eigenvalues of its connected components),
which we denote by $\alpha_\calH$, and scale the bid of each agent by the respective
coordinate of the eigenvector that corresponds to $\alpha_\calH$.
\end{itemize}
A given set system may be pruned in different ways, thus leading to different
values of $\alpha_\calH$.
We will refer to the largest of them, i.e., $\alpha=\sup_{\calH}\alpha_\calH$,
as the {\em eigenvalue} of our set system. It turns out that this quantity plays an important
role in our analysis.

We show that the $r$-out-of-$k$-system mechanism and the $^{\sqrt{\ }}$-mechanism
for the single path problem that were presented in~\cite{anna}
can be viewed as instantiations of our \mainmechanism. We then apply our scheme to two other
classes of set systems:  vertex cover systems, where the goal is to buy a vertex cover in a given graph,
and $k$-path systems, where the goal is to buy $k$ edge-disjoint paths between two vertices of a
given graph.

The $k$-path problem generalizes both the $r$-out-of-$k$ problem
and the single path problem, and captures many other natural scenarios.
However, this problem received limited attention from the algorithmic mechanism
design community so far (see, however,~\cite{karger}),
perhaps due to its inherent difficulty: the interactions
among the agents can be quite complex, and, prior to this work, it was not known
how to characterize Nash equilibria of the first-price auctions
for this setting in terms of the network structure.
In this paper, we obtain a strong lower bound on the total payments in Nash equilibria.
We then use this bound to show that a natural variant of the \mainmechanism\
that prunes all edges except those in the cheapest flow of size $k+1$ has frugality
ratio $\alpha\frac{k+1}{k}$.
Moreover, we show that this bound can be improved by a factor of $k+1$
if we consider a weaker payment bound suggested in~\cite{edith3}, which corresponds
to a buyer-pessimal rather than buyer-optimal Nash equilibrium
(i.e., the difference between two frugality bounds is akin to that between
the price of anarchy and the price of stability).

For the vertex cover problem, an earlier paper~\cite{edith3}
described a mechanism with frugality ratio $2\Delta$,
where $\Delta$ is the maximum degree of the input graph.
Our approach results in a mechanism whose frugality
ratio equals to the largest eigenvalue $\alpha$ of the adjacency matrix of the input graph.
As $\alpha\le \Delta$ for any graph $G$, this means that we improve the
result of~\cite{edith3} by at least a factor of 2 for all graphs.
Surprisingly, this stronger bound can be obtained
by a simple modification of the analysis in~\cite{edith3}.

\paragraph{Lower Bounds.}
We complement the bounds on the frugality of the \mainmechanism\ by proving
strong lower bounds on the frugality of (almost) any truthful mechanism.
In more detail, we exhibit a family of cost vectors on which
the payment of any {\em measurable} truthful mechanism
can be lower-bounded in terms of $\alpha$, where we call a
mechanism measurable if the payment to any
agent --- as a function of other agents'
bids --- is a Lebesgue measurable
function. Lebesgue measurability is a much weaker condition
than continuity or monotonicity; indeed, a mechanism that does not satisfy
this condition is unlikely to be practically implementable!
Our argument relies on Young's inequality and applies to any set system.

To turn this lower bound on payments into a lower bound on frugality,
we need to understand the structure of Nash equilibria for the bid vectors
employed in our proof. For $k$-path systems, we can achieve this by using
our characterization of Nash equilibria in such systems. As a result,
we obtain a lower bound on frugality of any ``measurable'' truthful mechanism
that shows that our mechanism is within a factor of $(k+1)$ from optimal.
Moreover, it is, in fact, optimal, with respect to the weaker payment
bound of~\cite{edith3}.
For $r$-out-of-$k$ systems and
single path systems, our bound improves the lower bounds on frugality
given in~\cite{anna} by a factor of $2$ and $\sqrt{2}$, respectively. Our results give
strong evidence that simply choosing the cheapest $(k+1)$-flow mechanism,
which generalizes the $^{\sqrt{\ }}$-mechanism~\cite{anna} for $k=1$, is indeed an optimal
frugal mechanism for paths systems.

For the vertex cover problem, characterizing
the Nash equilibria turns out to be a more difficult task: in this case, the graph
$\calH$ is equal to the input graph, and therefore is not guaranteed to have
any regularity properties. However, we can still obtain non-trivial upper bounds
on the payments in Nash equilibria. These bounds enable us to show that our
mechanism for vertex cover is optimal for all triangle-free graphs, and, more
generally, for all graphs that satisfy a simple local sparsity condition.

\subsection{Related Work}\label{section-related-work}

There is a substantial literature on designing mechanisms with small payment
for shortest path systems~\cite{archer,edith,feigenbaum,czumaj,edith2,karger,yan}
as well as for other set systems~\cite{talwar,calinescu,bikhchandani,anna,edith3},
starting with the seminal work of Nisan and Ronen~\cite{nisan}.
Our work is most closely related to~\cite{anna},~\cite{edith3} and~\cite{yan}: we employ
the frugality benchmark defined in~\cite{anna}, improve the bounds of~\cite{anna}
and~\cite{edith3}, and generalize the result of~\cite{yan}.

Simultaneously and independently, the idea of bounding frugality
ratios of set system auctions in terms of eigenvalues of certain matrices
was proposed by Kempe et al.~\cite{kempe}.
In contrast with our work, in~\cite{kempe} the authors
only study the frugality ratio of their mechanisms with respect to the
relaxed payment bound (see Section~\ref{section-multi-path}).
Their approach results in a 2-competitive mechanism for vertex cover systems,
2(k+1)-competitive mechanism for $k$-path systems, and a $4$-competitive mechanism
for cut auctions.

\section{Preliminaries}

A {\em set system} $(\calE, \calF)$ is given by a set $\calE$ of {\em agents} and a collection $\calF\subseteq 2^\calE$
of {\em feasible sets}. We restrict our attention to {\em monopoly-free} set systems,
i.e., we require $\bigcap_{S\in\calF}S=\emptyset$. Each agent $e\in \calE$ has a privately known \textit{cost} $c(e)$ that represents
the expenses that agent $e$ incurs if he is involved in performing the task.

A \textit{mechanism} for a set system $(\calE, \calF)$ takes a {\em bid vector} $\vecb=(b(e))_{e\in\calE}$ as input, where $b(e)\ge c(e)$ for any $e\in \calE$,
and outputs a set of \textit{winners} $S\in \calF$ and
a \textit{payment} $p(e)$ for each $e\in \calE$. We require mechanisms
to satisfy \textit{voluntary participation}, i.e. $p(e)\ge b(e)$ for each $e\in S$ and $p(e)=0$ for each $e\notin S$.

Given the output of a mechanism, the \textit{utility} of an agent $e$ is $p(e)-c(e)$ if $e$ is a winner
and 0 otherwise.
We assume that agents are rational, i.e. aim to maximize their own utility. Thus,
they may lie about their true costs, i.e. bid $b(e)\neq c(e)$
if they can profit by doing so. We say that a mechanism is \textit{truthful} if every agent maximizes his
utility by bidding his true value, no matter what bids other agents
submit. A weaker solution concept is that of {\em Nash equilibrium}:
a bid vector constitutes a (pure) Nash equilibrium if no agent can increase his utility by unilaterally
changing his bid. Nash equilibria describe stable states of the market and can be seen as natural outcomes
of a bargaining process.

There is a well-known characterization of winner selection rules that yield truthful mechanisms.

\begin{theorem}[\cite{krishna,archer}]\label{theorem-monotone}
A mechanism is truthful if and only if its winner selection rule is \textup{monotone},
i.e., no losing agent can become a winner by increasing his bid, given the fixed bids of
all other agents. Further, for a given monotone selection rule, there is a unique truthful
mechanism with this selection rule: the payment to each winner is his {\em threshold bid},
i.e. the highest value he could bid and still win.
\end{theorem}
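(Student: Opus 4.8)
\medskip
\noindent\textbf{Proof plan.} This is the procurement‑auction analogue of Myerson's characterisation, so the plan is to fix a single agent $e$ together with the bid profile $\vecb_{-e}=(b(e'))_{e'\neq e}$ of everybody else and study the induced one‑parameter behaviour: as $b(e)$ varies over the reals the mechanism returns a win/lose bit $x_e(b(e))\in\{0,1\}$ and, when $x_e=1$, a payment $p_e(b(e))\ge b(e)$. I would then establish three facts, each by exhibiting or ruling out a single deviation of $e$ for a suitably chosen true cost $c(e)$, with $\vecb_{-e}$ held fixed: (i) truthfulness forces the selection rule to be monotone; (ii) truthfulness forces a winner's payment to equal his threshold bid; (iii) conversely, a monotone rule together with threshold payments (and $0$ to every loser) is truthful.

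For (i) I would argue by contraposition: if the rule is not monotone there are bids $b<b'$ with $e$ losing at $b$ and winning at $b'$ (others fixed), and then in the instance where $e$'s true cost is $c(e)=b$, truthful bidding loses and yields utility $0$, whereas bidding $b'>c(e)$ wins and, by voluntary participation, pays $p_e(b')\ge b'>c(e)$, a strictly positive utility --- a contradiction. Hence for each fixed $\vecb_{-e}$ the winning set $W_e=\{b(e):x_e=1\}$ is downward closed, so it is an interval of the form $(-\infty,\theta_e)$ or $(-\infty,\theta_e]$ (or empty, or all of $\mathbb R$), with $\theta_e=\theta_e(\vecb_{-e})=\sup W_e$ the threshold bid.

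For (ii) I would first show a winner's payment is independent of his own bid: given $b_1<b_2$ both in $W_e$, truthfulness applied to an agent of true cost $b_1$ deviating up to $b_2$ gives $p_e(b_1)\ge p_e(b_2)$, and applied to an agent of true cost $b_2$ deviating down to $b_1$ (an underbid, which must still be unprofitable) gives the reverse inequality; so $p_e\equiv P$ on $W_e$. To see that $P=\theta_e$: if $P<\theta_e$, an agent whose true cost lies strictly between $P$ and $\theta_e$ wins when truthful yet is paid less than his cost (negative utility), while overbidding above $\theta_e$ yields $0$ --- a profitable deviation; if $P>\theta_e$, an agent with $\theta_e<c(e)<P$ loses when truthful (utility $0$) but by underbidding below $\theta_e$ wins and is paid $P>c(e)$ --- again a profitable deviation. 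Hence $P=\theta_e$, which gives both the payment formula and its uniqueness for the given selection rule. For (iii) I would simply check incentives directly: with threshold payments, an agent of cost $c(e)<\theta_e$ wins when truthful for utility $\theta_e-c(e)>0$ and no bid does better (other winning bids pay the same $\theta_e$, losing bids pay $0$), while an agent of cost $c(e)>\theta_e$ loses truthfully for utility $0$ and any underbid that wins pays only $\theta_e<c(e)$; the boundary case $c(e)=\theta_e$ is immaterial and absorbed by the rule's tie‑breaking.

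I expect the only real subtlety to be part (ii) --- showing the winner's payment is \emph{exactly} the threshold rather than merely some constant (or merely non‑increasing in the own bid) forces one to invoke deviations in which an agent bids below his true cost, and to be slightly careful about what the mechanism does at the single point $b(e)=\theta_e$; the remaining steps are routine verifications of one‑agent incentive constraints.
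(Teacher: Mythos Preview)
The paper does not supply its own proof of this statement: Theorem~\ref{theorem-monotone} is quoted as a known characterisation from~\cite{krishna,archer} and is used as a black box throughout. Your outline is exactly the standard single‑parameter (procurement) Myerson argument and is correct; in particular your step~(ii) --- first showing the winner's payment is constant on the winning region via the two opposite deviations, then pinning that constant to $\theta_e$ --- is the usual route, and you are right that the only delicate point is handling the boundary value $b(e)=\theta_e$ and allowing underbids when arguing $P\le\theta_e$. One small remark: in your ``$P<\theta_e$'' case you can shortcut the argument by invoking voluntary participation directly (a winning bid $b$ with $P<b<\theta_e$ would already violate $p(e)\ge b(e)$), which avoids having to manufacture a true cost in that range.
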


An example of a truthful set system auction is given by
the VCG mechanism~\cite{vickrey,clarke,groves}.
However, as discussed in Section 1, VCG often
results in a large overpayment to winners.
Another natural mechanism for buying a set is the \textit{first-price
auction}: given the bid vector $\vecb$, pick a subset $S\in \calF$
minimizing $b(S)$, and pay each winner $e\in S$ his bid $b(e)$.
While the first-price auction is not truthful, and more generally,
does not possess dominant strategies, it essentially
admits a Nash equilibrium with a relatively small
total payment. (More accurately, as observed by~\cite{karger}, a
first-price auction may not have a pure strategy Nash
equilibrium. However, this non-existence result
can be circumvented in several ways, e.g. by
considering instead an $\varepsilon$-Nash equilibrium
for arbitrarily small $\varepsilon>0$ or using oracle access to the true
costs of agents to break ties.)
The payment in a buyer-optimal Nash equilibrium would constitute
a natural benchmark for truthful mechanisms. However, due to the difficulties described above,
we use instead the following benchmark proposed by Karlin et al.~\cite{anna},
which captures the main properties of a Nash equilibrium.

\begin{definition}[Benchmark $\nu(\mathbf{c})$~\cite{anna}]\label{def-benchmark}
Given a set system $(\calE, \calF)$, and a feasible set $S\in \calF$ of
minimum total cost w.r.t. $\vecc$,
let $\nu(\mathbf{c})$ be the value of an
optimal solution to the following optimization problem:
\begin{eqnarray*}
\min \ \ & & \sum_{e\in S}b(e) \\
      \text{\em s.t.} \ \ & & (1) \ b(e) \ge c(e) \text{\em \ for all } e\in \calE \\
           & & (2) \ \sum_{e\in S\setminus T} b(e) \le \sum_{e\in T\setminus S} c(e)
                     \text{\em \ for all } \ T\in \calF \\
           & & (3) \ \text{\em For every $e\in S$ there is a $T\in \calF$ s.t. $e\notin T$ and }
                   \sum_{e'\in S\setminus T} b(e') = \sum_{e'\in T\setminus S} c(e')
\end{eqnarray*}
\end{definition}

Intuitively, in the optimal solution of the above system, $S$ is the set of winners in
the first-price auction. By condition (3), no winner $e\in S$ can improve his utility by
increasing his bid $b(e)$, as he would not be a winner anymore. In addition,
by conditions (1) and (2), no agent $e\in\calE\setminus S$ can obtain a positive utility
by decreasing his bid. Hence, $\nu(\mathbf{c})$ gives the value of the cheapest Nash equilibrium of
the first-price auction assuming that the most ``efficient" feasible set $S$ wins.

\begin{definition}[Frugality Ratio]
Let $\mathcal{M}$ be a truthful mechanism for the set system $(\calE, \calF)$ and
let $p_{_\mathcal{M}}(\mathbf{c})$
denote the total payment of $\mathcal{M}$ when the true costs are
given by a vector $\mathbf{c}$.
Then the \textup{frugality ratio} of $\mathcal{M}$ on $\vecc$ is defined as
$\phi_\calM(\vecc) = \frac{p_{_\mathcal{M}}(\mathbf{c})}{\nu(\mathbf{c})}$.
Further, the frugality ratio of $\mathcal{M}$ is defined as
$\phi_\calM = \sup_{\vecc}\phi_\calM(\vecc)$.
\end{definition}

\section{Pruning-Lifting Mechanism}\label{section-mechanism}

In this section, we describe a general scheme for designing truthful mechanisms for set systems,
which we call \mainmechanism. For a given set system $(\calE,\calF)$, the mechanism is composed of the following steps:
\begin{itemize}
\item Pruning. The goal of the pruning process is to drop some elements of
$\calE$ to expose the structure of the competition between the agents;
we denote the set of surviving
agents by $\calE^*$. We require the process to satisfy the
following properties:
    \begin{itemize}
    \item Monotonicity: for any given vector of other agents' bids, if an agent $e$ is
          dropped when he bids $b$, he is also dropped if he bids any $b'>b$.
          We set $t_1(e)=\inf\{b'\mid\text{$e$ is dropped when bidding $b'$}\}$.

    \item Bid-independence: for any given vector of other agents' bids,
          let $b$ and $b'$ be two bids of agent $e$ such that $e$ is
          not dropped when he submits either of them. Then for both
          of these bids the set $\calE^*$ of remaining agents is the same. That is, $e$
          cannot control the outcome of the pruning process as long as he survives.
          Monotonicity and bid-independence conditions are important to ensure
          the truthfulness of the mechanism.

    \item Monopoly-freeness: the remaining set system must remain monopoly-free, i.e.,
          $\bigcap_{S\in\calF^*}S=\emptyset$, where $\calF^*=\{S'\in\calF\mid S'\subseteq \calE^*\}$.
          This condition is necessary because in the winner selection stage
          we will choose a winning feasible set from $\calF^*$. Therefore, we have to make
          sure that no winning agent can charge an arbitrarily
          high price due to lack of competition.
    \end{itemize}

\item Lifting. The goal of the lifting process is to assign a weight to each agent
      in $\calE^*$ in order to take into account the size of each feasible set.
      To this end, construct an undirected graph $\mathcal{H}$ by
      (a) introducing a node $v_e$ for each $e\in \calE^*$, and (b) connecting $v_e$ and
      $v_{e'}$ if and only if any feasible set in $\calF^*$ contains either $e$ or $e'$.
      We will refer to $\calH$ as the {\em dependency graph} of $\calE^*$.
      For each connected component $\mathcal{H}_j$ of $\mathcal{H}$, compute the
      largest eigenvalue $\alpha_j$ of its adjacency matrix $A_j$, and let
      $\left(w(v_e)\right)_{v_e\in\calH_j}$ be the eigenvector of $A_j$
      associated with $\alpha_j$. That is, $A_j\mathbf{w}^j=\alpha_j\mathbf{w}^j$, where
      $\mathbf{w}^j=\left((w(v_e))_{v_e\in\calH_j}\right)^T$. Set $\alpha=\max\alpha_j$.

\item Winner selection.
      Define $b'(e)=\frac{b(e)}{w(v_e)}$ for each $e\in \calE^*$, and select a
      feasible set $S\in \calF^*$ with the smallest total bids w.r.t. $\mathbf{b}'$.
      Let $t_2(e)$ be the threshold bid for $e\in \calE^*$ to be selected
      at this stage.

\item Payment. The payment to each winner $e\in S$ is $p(e)=\min\{t_1(e),t_2(e)\}$, where
      $t_1(e)$ and $t_2(e)$ are the two thresholds defined above.
\end{itemize}

Recall that the largest eigenvalue of the adjacency matrix
of a connected graph is positive and its associated eigenvector
has strictly positive coordinates~\cite{godsil}.
Therefore, $w(v_e)>0$ for all $e\in\calE^*$.

We will now define a quantity  $\alpha_{(\calE, \calF)}$
that will be instrumental in characterizing
the frugality ratio of truthful mechanisms on $(\calE, \calF)$.
Let $\calS(\calE, \calF)$ be the collection of all
monopoly-free subsets of $\calE$, i.e.,
set $\calS(\calE, \calF) = \{S\subseteq \calE\mid \bigcap_{T\in \calF, T\subseteq S}\ T = \emptyset\}$.
The elements of $\calS(\calE, \calF)$ are the possible
outcomes of the pruning stage.
For any subset $S\in\calS(\calE, \calF)$, let $\calH_S$
be its dependency graph and $A_S$ be the adjacency matrix of $\calH_S$.
Let $\alpha_{_S}$ be the largest eigenvalue of $A_S$
(or the maximum of the largest eigenvalues
of the adjacency matrices of the connected components of $\calH_S$,
if $\calH_S$ is not connected).
Set $\alpha_{(\calE, \calF)}=\max_{S\in \calS(\calE, \calF)} \alpha_{_S}$;
we will refer to $\alpha_{(\calE, \calF)}$
as the {\em eigenvalue of the set system $(\calE, \calF)$}.

Note that once $\calE^*\in \calS(\calE, \calF)$ is selected in the pruning step, the
computation of $\alpha$ and the weight vector $(w(v_e))_{e\in\calE^*}$
does not depend on the bid vector.
This property is crucial for showing that our mechanism
is truthful. (Due to space limits, most proofs in the paper are
relegated to the Appendix.)

\begin{theorem}\label{theorem-truthful}
\mainmechanism\ is truthful for any set system $(\calE, \calF)$.
\end{theorem}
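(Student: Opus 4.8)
The plan is to invoke the monotonicity characterization of Theorem~\ref{theorem-monotone}: it suffices to show that the winner-selection rule of the \mainmechanism\ is monotone, and that for each winner $e$ the prescribed payment $p(e)=\min\{t_1(e),t_2(e)\}$ is exactly his threshold bid. So I would fix an agent $e$ together with an arbitrary vector of bids of all the other agents, and analyze how the outcome for $e$ depends on $b(e)$ alone.

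The first observation is that $e$ can win only if he survives the pruning stage, since the winning set is chosen from $\calF^*$ and a pruned agent lies in no member of $\calF^*$ (monopoly-freeness of $\calF^*$ guarantees $\calF^*\neq\emptyset$, so a winning set exists). By the monotonicity property of the pruning rule, the set of bids for which $e$ survives is an initial segment, with supremum $t_1(e)$. The crux of the argument is then bid-independence: whenever $e$ survives, the surviving set $\calE^*$ — and therefore the dependency graph $\calH$, the value $\alpha$, the weight vector $(w(v_{e'}))_{e'\in\calE^*}$, and the family $\calF^*$ — are all the same regardless of $b(e)$. In particular $w(v_e)>0$ is a constant that does not depend on $e$'s own bid, so on the survival range the scaled bid $b'(e)=b(e)/w(v_e)$ is strictly increasing in $b(e)$ while every $b'(e')$ with $e'\neq e$ stays fixed.

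Next I would use the standard fact that selecting a feasible set of minimum total scaled cost (under any fixed tie-breaking rule) is a monotone selection rule: raising $b'(e)$ weakly increases the cost of every feasible set containing $e$ and leaves all other feasible sets unchanged, so an agent not in the selected set remains outside it as his bid grows. Hence, conditioned on $e$ surviving, the set of bids for which $e$ belongs to the selected set is again an initial segment, with supremum $t_2(e)$ (well defined precisely because $w(v_e)$ is the fixed constant above). Combining the two stages: as $b(e)$ increases with all other bids fixed, $e$ may stop being selected and/or get pruned, and in either case he loses and stays a loser; thus the overall selection rule is monotone, and $e$ is a winner exactly when $b(e)$ is below both $t_1(e)$ and $t_2(e)$. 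Consequently the threshold bid of $e$ equals $\min\{t_1(e),t_2(e)\}$, which is exactly the payment charged by the mechanism; voluntary participation ($p(e)\ge b(e)$ for winners, $p(e)=0$ otherwise) is then immediate. By Theorem~\ref{theorem-monotone}, the \mainmechanism\ is truthful.

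I expect the genuinely substantive point to be isolating and using bid-independence to ``freeze'' the lifting weights, since once $w(v_e)$ is known to be independent of $b(e)$ the winner-selection stage becomes an ordinary min-cost selection and both stages are monotone by routine arguments. The only delicate bookkeeping is the boundary behaviour of the two thresholds — ties in the min-cost selection and the exact comparison between ``$e$ survives'' and ``$e$ is pruned'' — which I would handle with a fixed tie-breaking convention and the convention (as in Theorem~\ref{theorem-monotone}) that the threshold bid is the supremum of winning bids.
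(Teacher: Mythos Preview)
Your proposal is correct, and the key insight---using bid-independence to freeze $\calE^*$, $\calH$, and the weight vector $\vecw$ so that the lifting stage is a fixed rescaling independent of $b(e)$---is precisely the substantive point in the paper's proof as well.

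The packaging differs slightly. You argue via Theorem~\ref{theorem-monotone}: show the composite selection rule is monotone, identify the threshold as $\min\{t_1(e),t_2(e)\}$, and conclude. The paper instead gives a direct utility case analysis: fix an agent $e$ and the others' bids, and compare $e$'s utility when bidding $c(e)$ versus any $b'(e)\neq c(e)$, splitting on whether $e$ survives pruning at his true cost. In Case~1 ($e$ survives at $c(e)$), bid-independence gives that $t_2(e)$ is unchanged for any surviving bid, so the payment is the same; in Case~2 ($e$ is pruned at $c(e)$), monotonicity forces $t_1(e)\le c(e)$, so any winning deviation yields nonpositive utility. Your route is arguably cleaner because it leverages the characterization theorem already stated, while the paper's direct argument sidesteps the tie-breaking and boundary issues you flag at the end. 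Either way the content is the same.
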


In the rest of this section, we will show that the mechanisms
for $r$-out-of-$k$ systems and single path systems proposed in~\cite{anna}
can be viewed as instantiations of our \mainmechanism.
By Theorems~\ref{theorem-monotone} and~\ref{theorem-truthful}, we can ignore the payment
rule in the following discussion.

\subsection{$r$-out-of-$k$ Systems Revisited}\label{section-r-out-of-k}

In an $r$-out-of-$k$ system, the set of agents $\calE$ is a union of $k$
disjoint subsets $S_1,\ldots,S_k$ and the feasible sets
are unions of exactly $r$ of those subsets.
Given a bid vector $\vecb$, renumber the subsets $S_1, \dots, S_k$
in order of non-decreasing bids, i.e.,
$b(S_1)\le b(S_2)\le \cdots \le b(S_k)$.

The mechanism proposed in~\cite{anna} deletes all but the first $r+1$ subsets,
and then solves a system of equations given by
\begin{equation*}
\text{\large{($\lozenge$)}} \quad
\beta = \frac{1}{rx_i}\cdot\sum_{j\neq i}x_j\cdot |S_j|\quad
\text{for $i=1, \dots, r+1$}.
\end{equation*}
It then scales the bid of each set $S_i$ by setting
$b'(S_i)=\frac{b(S_i)}{x_i}$, discards
the set with the highest scaled bid w.r.t. $\mathbf{b}'$, and outputs the remaining sets.

Now, clearly, the first step of this mechanism can be interpreted
as a pruning stage. Further, for $r$-out-of-$k$ systems
the graph $\calH$ constructed in the lifting
stage of our mechanism is a complete $(r+1)$-partite graph. It is not hard
to verify that for any positive solution $(x_1, \dots, x_{r+1}, \beta)$
of the system ($\lozenge$), $\beta\cdot r$
gives the largest eigenvalue of the adjacency matrix of $\calH$
and $(x_1, \dots, x_1, \dots, x_{r+1},\dots, x_{r+1})$ is the corresponding eigenvector.
Thus, the mechanism of~\cite{anna} implements \mainmechanism\
for $r$-out-of-$k$ systems.

In~\cite{anna} it is shown that the frugality ratio of this
mechanism is $\beta$, and that the frugality ratio of any truthful
mechanism for $r$-out-of-$k$ systems is at least $\frac{\beta}{2}$.
As $r$-out-of-$k$ systems can be viewed as a special case of
$r$-path systems, Theorem~\ref{thm:frug-paths}
allows us to improve this lower bound to $\frac{\beta r}{r}=\beta$.

\subsection{Single Path Mechanisms Revisited}\label{section-single-path}

In a single path system, agents are edges of a given directed graph $G=(V, E)$
with two specified vertices $s$ and $t$,
i.e. $\calE=E$ and $\calF$ consists of all sets of edges that contain
a path from $s$ to $t$.

Given a bid vector $\vecb$, the $^{\sqrt{\ }}$-mechanism~\cite{anna}
first selects two edge-disjoint $s$-$t$ paths $P$ and $P'$ that minimize $b(P)+b(P')$.
Assume that $P$ and $P'$ intersect at $s=v_1,v_2,\ldots,v_{\ell+1}=t$, in the order in which they appear in
$P$ and $P'$. Let
$P_i$ and $P'_i$ be the subpaths of $P$ and $P'$ from $v_i$ to $v_{i+1}$, respectively.
The $^{\sqrt{\ }}$-mechanism sets $b'(e)=b(e)\sqrt{|P_i|}$ for $e\in P_i$,
$b'(e)=b(e)\sqrt{|P'_i|}$ for $e\in P'_i$, and chooses a cheapest path
in $P\cup P'$ w.r.t. $\vecb'$.

As in the previous case, the selection of $P$ and $P'$ can be viewed as the pruning
process. The corresponding graph $\calH$ consists of $\ell$ connected components, where
the $i$-th component $\calH_i$ is a complete bipartite graph with parts of size $|P_i|$
and $|P'_i|$. Its largest eigenvalue is given by $\alpha_i=\sqrt{|P_i||P'_i|}$, and
the coordinates of the corresponding eigenvector are given by $w(v_e)=1/\sqrt{|P_i|}$
for $e\in P_i$ and $w(v_e)=1/\sqrt{|P'_i|}$ for $e\in P'_i$. Thus,
the $^{\sqrt{\ }}$-mechanism can be viewed as a special case of the \mainmechanism.
It is shown that the frugality ratio of the $^{\sqrt{\ }}$-mechanism
is within a factor of $2\sqrt{2}$ from optimal;
Theorem~\ref{thm:frug-paths} below shows that this bound can be improved by a factor of $\sqrt{2}$
(this has also been shown by Yan~\cite{yan} via a proof that is considerably more
complicated than ours).

\section{Vertex Cover Systems}\label{sec:vc}
In the vertex cover problem, we are given a graph $G=(V, E)$ whose vertices
are owned by selfish agents. Our goal is to purchase a vertex cover of $G$.
That is, we have $\calE=V$, and $\calF$ is the collection of all vertex covers of $G$.
Let $A$ denote the adjacency matrix of $G$, and let
$\Delta$, $\alpha=\alpha_{(\calE,\calF)}$ and $\vecw=(w(v))_{v\in V}$
denote, respectively, the maximum degree of $G$, the largest eigenvalue of $A$ and the
corresponding eigenvector.

We will use the pruning-lifting scheme to construct a mechanism whose frugality ratio
is $\alpha$; this improves the bound of $2\Delta$ given in~\cite{edith3} by at least a factor of 2
for all graphs, and by as much as a factor of $\Theta(\sqrt{n})$ for some graphs
(e.g., the star).

Observe first that the vertex cover system plays a special
role in the analysis of the performance of the
pruning-lifting scheme. Indeed, on one hand, it is straightforward
to apply the \mainmechanism\ to this system: since removing any agent
will make each of its neighbors a monopolist,
the pruning stage of our scheme is redundant, i.e., $\calH=G$.
That is, there is a unique implementation of \mainmechanism\
for vertex cover systems:
we set $b'(v)=\frac{b(v)}{w(v)}$ for all $v\in V$,
pick any $S\in\arg\min\{b'(T)\mid T\text{ is a vertex cover for $G$}\}$
to be the winning set, and
pay each agent $v\in S$ his threshold bid $t(v)$.
On the other hand, for general set systems, any feasible set in the pruned system
corresponds to a vertex cover of $\calH$: indeed, by construction of the graph $\calH$,
any feasible set must contain at least one endpoint of any edge of $\calH$.
In general, the converse is not true: a vertex cover of $\calH$
is not necessarily a feasible set. However, for $k$-path systems
it is possible to show that any cover of $\calH$ corresponds to a $k$-flow.
We can use this fact to provide an alternative proof of
Theorems~\ref{thm:alpha-nu} and~\ref{thm:alpha-mu}.

We will now bound the frugality of \mainmechanism\
for vertex cover systems.

\begin{theorem}\label{vc-upper}
The frugality ratio of \mainmechanism\  for vertex cover systems
on a graph $G$ is at most $\alpha=\alpha_{(\calE,\calF)}$.
\end{theorem}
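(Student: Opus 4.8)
The plan is to directly bound the total payment $p(\calM)(\vecc) = \sum_{v \in S} t(v)$ by $\alpha \cdot \nu(\vecc)$, where $S$ is the winning vertex cover (chosen to minimize $\vecb'(T) = \sum_{v\in T} b(v)/w(v)$ over all vertex covers $T$), and $t(v)$ is the threshold bid of $v$. Since there is a unique implementation of the \mainmechanism\ for vertex cover (no pruning, $\calH = G$), the key objects are fixed from the start: the eigenvalue $\alpha$ of the adjacency matrix $A$ and the strictly positive Perron eigenvector $\vecw$. First I would express the threshold $t(v)$: given the bids $b(u)$ for $u \neq v$, agent $v \in S$ wins as long as some cheapest cover containing $v$ beats every cover not containing $v$; concretely, $t(v)/w(v)$ equals the difference between the cheapest cover (w.r.t.\ $\vecb'$) that excludes $v$ and the cheapest cover that uses everything-but-$v$ of $S$. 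In the truthful mechanism we evaluate this at $\vecb = \vecc$, so I would want a clean combinatorial handle on $t(v)$ in terms of $\vecc$ and the structure of $G$.

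Next I would bring in the benchmark. Let $S^*$ be the min-cost vertex cover w.r.t.\ $\vecc$ that defines $\nu(\vecc)$, and let $\vecb$ be the optimal solution of the LP in Definition~\ref{def-benchmark}, so $\nu(\vecc) = \sum_{e \in S^*} b(e)$. The standard trick — this is the ``simple modification of the analysis in~\cite{edith3}'' promised in the introduction — is to compare the mechanism's winning set $S$ against covers built out of $S^*$ and its complement. Since any vertex cover of $G$ must, for every edge $\{u,v\}$, contain $u$ or $v$, and $S^*$ is a cover, the set $(\calE \setminus S^*) \cup \{v\}$ is a cover for every $v$; more usefully, for $v \in S$ one can use conditions (1)–(3) of the benchmark LP to produce, for each winning vertex, a competing cover witnessing that its threshold is not too large. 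The aim is an inequality of the shape $t(v) \le w(v) \cdot \big(\text{something involving } \sum_{u \sim v} b(u)/w(u)\big)$, i.e., the threshold of $v$ is controlled by the $\vecb'$-weights of its neighbors, because $v$ is only ``needed'' in a cover to take care of its incident edges.

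Summing over $v \in S$, I would then get
\[
p(\calM)(\vecc) \;=\; \sum_{v \in S} t(v) \;\le\; \sum_{v \in S} w(v) \sum_{u \sim v} \frac{b(u)}{w(u)} \;\le\; \sum_{\{u,v\} \in E} \left( \frac{w(v)}{w(u)} + \frac{w(u)}{w(v)} \right) b(u)\,b(v)^{0}\cdots
\]
— at which point the eigenvector relation $A\vecw = \alpha \vecw$, i.e.\ $\sum_{u \sim v} w(u) = \alpha\, w(v)$, collapses the neighbor sum. The cleanest route is probably: for each edge the contribution to $\sum_{v} w(v)\sum_{u\sim v} b(u)/w(u)$ from its two endpoints, bounded against $b$-values on $S^*$, telescopes so that $\sum_{v\in S} w(v)\sum_{u\sim v} b(u)/w(u) \le \alpha \sum_{e \in S^*} b(e) = \alpha\,\nu(\vecc)$, using $\sum_{u\sim v} w(u) = \alpha w(v)$ to absorb exactly one factor of $\alpha$ and the benchmark constraint $\sum_{e\in S\setminus T} b(e) \le \sum_{e\in T\setminus S} c(e)$ (with $T$ ranging over $S^*$-based covers) to pass from $S$ to $S^*$.

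The main obstacle I anticipate is the second step — extracting the bound $t(v) \le w(v)\sum_{u\sim v} b(u)/w(u)$ (or whatever the correct precise form is) from the threshold definition together with the benchmark LP. One has to be careful that the competing cover used to bound $v$'s threshold is genuinely a vertex cover of $G$ and that the benchmark constraints (2) and (3) apply to it; the subtlety is that the mechanism optimizes w.r.t.\ the scaled bids $\vecb'$ while the benchmark is stated in the original costs, so the $w(v)$ factors must be tracked through every comparison. Handling the case where a winning vertex $v$ is isolated-looking within $S$ (all its edges covered by other winners) — so that its threshold is actually governed by $S^*$ rather than its immediate neighbors — is where I expect to spend the most care. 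Everything after that is the Perron-eigenvector identity plus rearranging the edge sum, which is routine.
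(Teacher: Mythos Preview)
Your proposal has the right high-level shape (bound each threshold, sum, collapse via the eigenvector identity, compare to $\nu$), but the specific way you set up the threshold bound does not work, and once you fix it the proof becomes much simpler than what you anticipate.

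First, the competing cover you propose is not valid: $(\calE\setminus S^*)\cup\{v\}$ is generally \emph{not} a vertex cover, because $V\setminus S^*$ misses every edge with both endpoints in $S^*$. More importantly, there is no need to bring in $S^*$ or the benchmark bid vector at all. The correct competing cover for a winner $v\in S$ is $(S\setminus\{v\})\cup\{u:u\sim v,\ u\notin S\}$: since $S$ is a cover, removing $v$ only uncovers edges $\{v,u\}$ with $u\notin S$, and adding those $u$'s restores coverage. Because $S$ minimizes the scaled cost $b'$, this immediately gives
\[
t(v)\ \le\ w(v)\sum_{u\sim v,\ u\notin S}\frac{c(u)}{w(u)},
\]
with the sum restricted to neighbors \emph{outside} $S$. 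This restriction is the crucial point you are missing: when you swap the order of summation you get $\sum_{u\notin S}\frac{c(u)}{w(u)}\sum_{v\in S,\ v\sim u}w(v)$, and since $S$ is a cover, every neighbor of $u\notin S$ lies in $S$, so the inner sum is exactly $\sum_{v\sim u}w(v)=\alpha\,w(u)$. Hence the total payment is at most $\alpha\sum_{u\notin S}c(u)$. If instead you sum over \emph{all} neighbors of $v$ as you wrote, the eigenvector identity only gives an inequality and the result is $\alpha\sum_{u\in V}b(u)$, which does not reduce to $\alpha\,\nu(\vecc)$.

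The final step is then the one-line bound $\nu(\vecc)\ge\sum_{u\notin S}c(u)$ (this is Lemma~8 of~\cite{edith3}): the vector $b(u)=c(u)$ already satisfies all constraints of the benchmark program, so $\nu(\vecc)\ge c(S^*)\ge c(V\setminus S)$, the last inequality because $V\setminus S$ is an independent set and hence contained in some minimum-cost cover's complement. So the ``main obstacle'' you flag --- extracting the threshold bound from the benchmark LP constraints (2) and (3) --- simply disappears; neither the LP solution $\vecb$ nor $S^*$ enters the argument for the payment bound.
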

\begin{proof}
Since our mechanism is truthful, we have $b(v)=c(v)$ for all $v\in V$.
By optimality of $S$
we have $b'(v)\le \sum_{uv\in E, u\not\in S}b'(u)$, and therefore
$t(v)\le w(v)\sum_{uv\in E, u\not\in S}\frac{c(u)}{w(u)}$.
Thus, we can bound
the total payment of our mechanism given a bid vector $\vecb$ as
\begin{eqnarray*}
\sum_{v\in S}t(v) \le \sum_{v\in S}w(v)\sum_{uv\in E, u\not\in
S}\frac{c(u)}{w(u)} = \sum_{u\notin S}\frac{c(u)}{w(u)}\sum_{uv\in
E}w(v) = \sum_{u\notin S}\frac{c(u)}{w(u)}\alpha w(u) =
\alpha\sum_{u\notin S}c(u).
\end{eqnarray*}
Lemma 8 in~\cite{edith3} shows that for any cost vector $\vecc$
we have $\nu(\vecc)\ge \sum_{u\notin S}c(u)$. Therefore, the frugality
ratio of \mainmechanism\  for vertex cover on $G$ is at most $\alpha$.
\end{proof}
In Section~\ref{vc-lower-bound} we show that our mechanism
is optimal for a large class of graphs.

\subsection{Computational Issues}
To implement
\mainmechanism\  for vertex cover, we need to select the vertex cover
that minimizes the scaled costs given by $(b'(v))_{v\in V}$, i.e.,
to solve an NP-hard problem. However, the argument in
Theorem~\ref{vc-upper} applies to any truthful mechanism
that selects a {\em locally optimal} solution, i.e., a vertex cover $S$
that satisfies $b'(v)\le\sum_{uv\in E, u\not\in S}b'(u)$ for all
$v\in S$. Paper~\cite{edith3} argues that any monotone winner selection
algorithm for vertex cover can be transformed into a locally optimal one,
and shows that a variant of the classic 2-approximation
algorithm for this problem~\cite{baryehuda} is monotone.
This leads to the following corollary.

\begin{coro}
There exists a truthful polynomial-time vertex cover auction
that given a graph $G$ outputs a solution whose cost is within a factor of $2$
from optimal and whose frugality ratio is at most $\alpha$.
\end{coro}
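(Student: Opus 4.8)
The plan is to obtain the mechanism by composing the algorithmic results of~\cite{edith3} with the frugality bound of Theorem~\ref{vc-upper}, using Theorem~\ref{theorem-monotone} for truthfulness and payments. Recall that the exact \mainmechanism\ for vertex cover selects a vertex cover minimizing the scaled bids $b'(v)=b(v)/w(v)$, which is NP-hard; the corollary asserts that this can be done approximately --- within a factor of $2$ with respect to the scaled objective --- in polynomial time without hurting frugality. By Theorem~\ref{theorem-monotone} it suffices to exhibit a polynomial-time \emph{monotone} winner-selection rule whose output $S$ is (i) a vertex cover of $G$ with $b'(S)\le 2\min\{b'(T)\mid T\text{ is a vertex cover of }G\}$ and (ii) \emph{locally optimal} with respect to $b'$, that is, $b'(v)\le\sum_{uv\in E,\ u\notin S}b'(u)$ for every $v\in S$; the payment to each winner is then its threshold bid. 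Given such a rule, the mechanism is truthful by Theorem~\ref{theorem-monotone}, the efficiency guarantee is (i), and the frugality ratio is at most $\alpha$ because --- as noted in the discussion accompanying Theorem~\ref{vc-upper} --- the proof of that theorem uses nothing about $S$ beyond local optimality with respect to $b'$ together with the inequality $\nu(\vecc)\ge\sum_{u\notin S}c(u)$ (Lemma~8 of~\cite{edith3}), which holds for every vertex cover; since the mechanism is truthful we have $\vecb=\vecc$ and that computation goes through verbatim.

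To build the rule, I would run the monotone variant of the classical layering $2$-approximation for weighted vertex cover~\cite{baryehuda}, whose monotonicity is established in~\cite{edith3}, on the scaled bids $(b'(v))_{v\in V}$, obtaining a cover $S_0$ with $b'(S_0)\le 2\min_T b'(T)$; feeding it the scaled bids is legitimate because they are obtained from $\vecb$ by a fixed strictly positive scaling, so monotonicity in $b'$ coincides with monotonicity in $\vecb$. I would then apply the post-processing of~\cite{edith3} that converts any monotone winner-selection rule into one whose output is locally optimal, run with the weights $b'$: informally, as long as some winner $v$ satisfies $b'(v)>\sum_{uv\in E,\ u\notin S}b'(u)$, add those of $v$'s neighbors that lie outside the current cover and delete $v$; this keeps the set a vertex cover and strictly decreases $b'(S)$, so the procedure terminates at a locally optimal $S$ with $b'(S)\le b'(S_0)\le 2\min_T b'(T)$. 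Since this post-processing preserves monotonicity (the content of~\cite{edith3}), the composite rule is monotone; both stages, and the computation of the threshold payments, run in polynomial time as argued in~\cite{edith3}.

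The step I expect to require the most care is the bookkeeping that keeps properties (i)--(ii) true simultaneously through the post-processing while preserving monotonicity --- concretely, verifying against the precise transformation of~\cite{edith3} that it does not increase $b'(S)$, maintains the vertex-cover property, and is monotone as a function of $\vecb$ (and not only of $b'$). Once these points are settled, combining (i), (ii), Theorem~\ref{theorem-monotone}, Lemma~8 of~\cite{edith3}, and the computation in the proof of Theorem~\ref{vc-upper} yields a polynomial-time truthful vertex cover auction whose output $2$-approximates the scaled optimum and whose frugality ratio is at most $\alpha$, as claimed.
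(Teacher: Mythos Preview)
Your approach is exactly the paper's: invoke the monotone $2$-approximation and the local-optimality post-processing from~\cite{edith3}, observe (as the paper does in the paragraph immediately preceding the corollary) that the payment computation in Theorem~\ref{vc-upper} uses nothing about $S$ beyond local optimality with respect to $b'$ together with Lemma~8 of~\cite{edith3}, and conclude via Theorem~\ref{theorem-monotone}.

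The one place you are more explicit than the paper is in running the layering algorithm on the \emph{scaled} bids $b'$; this is the right choice for the frugality argument and for making the post-processing decrease the objective, but note that it yields $b'(S)\le 2\min_T b'(T)$ rather than $c(S)\le 2\min_T c(T)$, so if the corollary is read as a guarantee on the unscaled cost you would still owe a sentence. The paper's sketch is equally silent on which weights the $2$-approximation should use, so this is a shared ambiguity rather than a divergence from the paper's proof.
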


\section{Multiple Paths Systems}\label{section-multi-path}

In this section, we study in detail $k$-paths systems for a given integer $k\ge 1$.
In these systems, the set of agents $\calE$ is the set of edges
of a directed graph $G=(V,E)$ with two specified vertices $s,t\in V$.
The feasible sets are sets of edges that contain $k$ edge-disjoint $s$-$t$ paths.
Clearly, these set systems generalize both $r$-out-of-$k$ systems and single path systems.

Our mechanism for $k$-paths systems for a given directed graph $G$,
which we call \pathmechanism, is a natural generalization of the $^{\sqrt{\ }}$-mechanism~\cite{anna}:
In the pruning stage of our mechanism, given a bid vector $\vecb$, we pick $k+1$
edge-disjoint $s$-$t$ paths $P_1,\ldots,P_{k+1}$ so as
to minimize their total bid w.r.t. the bid vector $\mathbf{b}$.
Clearly, this procedure is monotone and bid-independent.
Let $G^*(\vecb)$ denote the subgraph composed of these $k+1$ paths.
The remaining steps of the mechanism (lifting, winner selection,
payment determination) are the same as in the general case (Section~\ref{section-mechanism}).
Since the \pathmechanism\  is an implementation of the \mainmechanism,
Theorem~\ref{theorem-truthful} implies that it is truthful.

Let $\calG^{k+1}$ denote the set of all subgraphs of $G$
that can be represented as a union of $k+1$ edge-disjoint $s$-$t$ paths in $G$.
For any $G^*\in\calG^{k+1}$, let $\calH(G^*)$ denote the dependency graph of $G^*$,
and let $\alpha(G^*)$ denote the maximum of the largest eigenvalues of the connected
components of $\calH(G^*)$. Set $\alpha_{k+1}=\max\{\alpha(G^*) \mid G^*\in\calG^{k+1}\}$.
We can bound the frugality ratio of our mechanism as follows.

\begin{theorem}\label{thm:alpha-nu}
The frugality ratio of \pathmechanism\ is at most $\alpha_{k+1}\frac{k+1}{k}$.
\end{theorem}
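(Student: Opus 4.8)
The plan is to mimic the structure of the proof of Theorem~\ref{vc-upper} for vertex cover, but to carry along the extra combinatorial structure that $k$-path systems provide. Fix a cost vector $\vecc$; by truthfulness of \pathmechanism\ (Theorem~\ref{theorem-truthful}) we may assume $\vecb=\vecc$. Let $G^*=G^*(\vecc)\in\calG^{k+1}$ be the union of the $k+1$ cheapest edge-disjoint $s$-$t$ paths selected in the pruning stage, let $\calH=\calH(G^*)$ be its dependency graph, let $\alpha=\alpha(G^*)\le\alpha_{k+1}$, and let $\vecw=(w(v_e))_{e}$ be the (component-wise) Perron eigenvector used in the lifting stage, so $\sum_{e'\,:\,v_ev_{e'}\in\calH}w(v_{e'})=\alpha\,w(v_e)$ for every $e\in G^*$. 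The winner set $S\in\calF^*$ is a feasible set of minimum scaled bid $b'(e)=b(e)/w(v_e)$. As observed in Section~\ref{sec:vc}, a feasible set of $\calF^*$ is a vertex cover of $\calH$; conversely, for $k$-path systems every vertex cover of $\calH$ is a $k$-flow (this is the fact the paper promises to use, so I will invoke it). Local optimality of $S$ then gives, exactly as in Theorem~\ref{vc-upper}, that the threshold $t_2(e)$ of each winner $e\in S$ satisfies $t_2(e)\le w(v_e)\sum_{e'\notin S,\,v_ev_{e'}\in\calH}\frac{c(e')}{w(v_{e'})}$, and since $p(e)\le t_2(e)$, summing over $e\in S$ and swapping the order of summation yields
\[
\sum_{e\in S}p(e)\ \le\ \sum_{e'\notin S}\frac{c(e')}{w(v_{e'})}\sum_{e\,:\,v_ev_{e'}\in\calH}w(v_e)\ =\ \alpha\sum_{e'\in G^*\setminus S}c(e').
\]
Here I used that the only edges contributing are those in $G^*$ (edges outside $G^*$ have been pruned and pay $0$), and the neighbourhood-sum identity for the eigenvector.

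The second, and genuinely new, ingredient is a lower bound on the benchmark of the form $\nu(\vecc)\ge\frac{k}{k+1}\sum_{e\in G^*\setminus S}c(e)$; combining this with the display above immediately gives $\phi\le\alpha\,\frac{k+1}{k}\le\alpha_{k+1}\frac{k+1}{k}$, which is the claim. To get this, let $B$ be the minimum-cost feasible set for $\vecc$ (the ``efficient'' set in Definition~\ref{def-benchmark}) and let $\vecb$ be an optimal solution of the LP defining $\nu(\vecc)$, so $\nu(\vecc)=\sum_{e\in B}b(e)$. The key point is that $G^*$ consists of $k+1$ edge-disjoint $s$-$t$ paths; any $k$ of them form a feasible set $T_i\in\calF$ (delete path $P_i$, $i=1,\dots,k+1$). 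Apply constraint~(2) of the LP with $T=T_i$: $\sum_{e\in B\setminus T_i}b(e)\le\sum_{e\in T_i\setminus B}c(e)\le c(T_i\setminus B)$. Summing over $i=1,\dots,k+1$, each edge of $G^*$ lies in exactly $k$ of the $T_i$, so the right side sums to $k\,c(G^*\setminus B)$, while on the left each edge $e\in B$ is counted once for every $i$ with $e\notin T_i$; an edge of $B\cap G^*$ contributes $b(e)$ once (the unique $i$ with $e\in P_i$) and an edge of $B\setminus G^*$ contributes $b(e)$ all $k+1$ times. Hence $(k+1)\sum_{e\in B\setminus G^*}b(e)+\sum_{e\in B\cap G^*}b(e)\le k\,c(G^*\setminus B)$, which gives $\nu(\vecc)=\sum_{e\in B}b(e)\ge\sum_{e\in B\cap G^*}b(e)\ge \cdots$ — and I need to push this toward $\frac{k}{k+1}\,c(G^*\setminus S)$. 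The cleanest route is to bound $c(G^*\setminus B)$ from below by $c(G^*\setminus S)$ up to the cost of $B$: since $B$ is a minimum-cost feasible set and $S$ is a feasible set in $\calF^*\subseteq\calF$, and since $S\subseteq G^*$ while $G^*$ contains $k+1$ edge-disjoint paths, a flow/cut argument shows $c(G^*\setminus B)\ge c(G^*\setminus S) - (\text{slack})$; combining with $\nu(\vecc)\ge\sum_{e\in B}b(e)\ge\sum_{e\in B}c(e)\ge c(B\cap G^*)$ and $c(G^*)=c(B\cap G^*)+c(G^*\setminus B)$ should close the loop and deliver $\nu(\vecc)\ge\frac{k}{k+1}c(G^*\setminus S)$.

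The main obstacle is precisely this last bookkeeping step: relating $c(G^*\setminus B)$ to $c(G^*\setminus S)$ and absorbing the leftover terms into $\nu(\vecc)$ via constraint~(1) ($b(e)\ge c(e)$) without losing the factor $\frac{k}{k+1}$. Two subtleties deserve care. First, $S$ and $B$ need not be comparable, so I cannot simply say $c(S)\le c(B)$ edge-wise; I must instead argue at the level of flow values — $G^*\setminus S$ cannot ``disconnect'' more than $G^*\setminus B$ does, because both $S$ and $B$ are feasible, i.e. both are covers of $\calH$ inside (resp. intersecting) $G^*$ — so the counting of $T_i$'s, not the choice of the optimal $B$, is what matters. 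Second, I should double-check that using $T_i\in\calF$ (rather than $\calF^*$) in constraint~(2) is legitimate: it is, since the LP in Definition~\ref{def-benchmark} ranges over all $T\in\calF$, and $T_i$, being a union of $k$ edge-disjoint $s$-$t$ paths, is feasible in the original system. Modulo making the cut argument precise, the proof is a short combination of the eigenvector identity (for the payment upper bound) and the averaging over the $k+1$ sub-$k$-flows $T_i$ (for the benchmark lower bound).
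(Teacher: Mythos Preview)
Your payment upper bound (step 1) is fine and is indeed the ``alternative proof'' route the paper alludes to in Section~\ref{sec:vc}: once one knows that every vertex cover of $\calH$ is a $k$-flow in $G^*$, the local-optimality computation goes through verbatim and gives $\sum_{e\in S}p(e)\le\alpha\,c(G^*\setminus S)$.

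The genuine gap is the lower bound on $\nu(\vecc)$. Constraint~(2) in Definition~\ref{def-benchmark} is an \emph{upper} bound on the bids of agents in $B$; summing it over $T_1,\dots,T_{k+1}$ produces
\[
\sum_{e\in B\cap G^*}b(e)+(k+1)\sum_{e\in B\setminus G^*}b(e)\ \le\ k\,c(G^*\setminus B),
\]
which is an upper bound on $\nu(\vecc)$, not a lower bound. No amount of bookkeeping relating $c(G^*\setminus B)$ to $c(G^*\setminus S)$ can reverse the direction of this inequality; the obstacle is not the ``slack'' you worry about but the sign. Constraints~(1) and~(2) alone cannot give any nontrivial lower bound on $\nu$: the all-$c$ bid vector $b\equiv c$ satisfies both, so they only yield $\nu(\vecc)\ge c(B)$, which is far too weak.

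The paper's proof obtains the lower bound from constraint~(3), and this is where the real work is. Using~(3) (every winner has a tight alternative), Lemma~\ref{lemma-nash-lower-bound} invokes a structural flow theorem (Theorem~\ref{them-nash-k-path}): at any bid vector satisfying (1)--(3), the graph contains $k+1$ edge-disjoint $s$-$t$ paths that are all \emph{shortest} with respect to $\vecb$, each of length $\nu(\vecc)/k$. Since $b\ge c$, this gives $\nu(\vecc)\ge k\,\delta_{k+1}(G,\vecc)$. The payment side is handled by Lemma~\ref{lem:k-path-upper} (payment $\le\alpha(G^*)L(G,\vecc)$), and the two are linked via $L(G,\vecc)\le c(G^*)\le(k+1)\delta_{k+1}(G,\vecc)$, since $G^*$ is the cheapest $(k+1)$-flow. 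Your averaging over $T_i$ is in spirit this last inequality, but it cannot substitute for the equilibrium characterization that produces the lower bound on $\nu$.
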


To prove Theorem~\ref{thm:alpha-nu}, we need the following definition.
\begin{definition}[Minimum Longest Path $\delta_{k+1}(G, \vecc)$]
For any $k+1$ edge-disjoint $s$-$t$ paths $P_1,\ldots,P_{k+1}$
in a directed graph $G$,
let $\delta_{k+1}(P_1,\ldots,P_{k+1}, \vecc)$ denote the length of the longest
$s$-$t$ path w.r.t.
cost vector $\mathbf{c}$ in the subgraph $G'$
composed of $P_1,\ldots,P_{k+1}$ (if $G'$ contains a
positive length cycle, set
$\delta_{k+1}(P_1,\ldots,P_{k+1})=+\infty$).
Define
\[
\delta_{k+1}(G, \vecc) =
\min\{\delta_{k+1}(P_1,\ldots,P_{k+1}, \vecc) \mid P_1,\ldots,P_{k+1}
\ \text{\em are $k+1$ edge-disjoint $s$-$t$ paths}\}.
\]
\end{definition}

It turns out that we can give a lower bound on $\nu(\vecc)$
in terms of $\delta_{k+1}(G, \vecc)$.

\begin{lemma}\label{lemma-nash-lower-bound}
For any $k$-paths system on a given graph $G$ with costs $\vecc$,
we have $\nu(\mathbf{c})\ge k\cdot \delta_{k+1}(G, \vecc)$.
\end{lemma}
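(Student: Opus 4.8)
The plan is to fix an optimal solution $\vecb$ of the program in Definition~\ref{def-benchmark} and establish the equivalent inequality $\delta_{k+1}(G,\vecc)\le\nu(\vecc)/k$. As costs are nonnegative, we may assume (after discarding superfluous, and hence zero‑cost, edges) that the minimum‑cost feasible set $S$ fixed in the benchmark is the edge set of a union of $k$ edge‑disjoint $s$-$t$ paths $Q_1,\dots,Q_k$. Introduce the combined cost vector $\mathbf{d}$ given by $d(e)=b(e)$ for $e\in S$ and $d(e)=c(e)$ for $e\notin S$; by constraint~(1), $\mathbf{d}\ge\vecc$. The first observation is that constraint~(2) says exactly that $S$ is a \emph{minimum $\mathbf{d}$-cost} $k$-flow, since for every feasible $T$
\[
d(T)=b(S\cap T)+c(T\setminus S)\ \ge\ b(S\cap T)+b(S\setminus T)=d(S).
\]
Consequently the residual network $G_S$ of the flow $S$ (unit capacities; each $e=(u,v)\in S$ reversed with cost $-d(e)$, each $e\notin S$ kept with cost $d(e)$) contains no negative cycle, so I would fix node potentials $\pi$ equal to the shortest‑path distances from $s$ in $G_S$, with $\pi(s)=0$; these make every residual arc have reduced cost at least $0$.

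The heart of the proof is to turn the tightness constraint~(3) into structural information about $S$. Fix an edge $e=(u,v)\in S$. Constraint~(3) supplies a feasible $T$ with $e\notin T$ and $b(S\setminus T)=c(T\setminus S)$, i.e. $d(T)=d(S)$, so $T$ too is a minimum $\mathbf{d}$-cost $k$-flow. The flow difference $T-S$ is a circulation in $G_S$ whose total $\mathbf{d}$-cost is $d(T)-d(S)=0$; decomposing it into directed cycles, each of reduced cost at least $0$, forces every arc it uses to have reduced cost exactly $0$. Since $e\in S\setminus T$, the reverse arc of $e$ is one of these, so $d(e)=\pi(v)-\pi(u)$. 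Because this holds for \emph{every} $e\in S$, each path $Q_i$ telescopes to $d(Q_i)=\pi(t)-\pi(s)$, and therefore
\[
\nu(\vecc)=\sum_{e\in S}b(e)=d(S)=\sum_{i=1}^{k}d(Q_i)=k\,\bigl(\pi(t)-\pi(s)\bigr),
\]
so $\pi(t)-\pi(s)=\nu(\vecc)/k$.

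It remains to exhibit $k+1$ edge‑disjoint $s$-$t$ paths witnessing $\delta_{k+1}(G,\vecc)\le\nu(\vecc)/k$. Monopoly‑freeness forces the $s$-$t$ max‑flow to be at least $k+1$: otherwise some edge of a size‑$k$ cut would belong to every feasible set. So we may augment $S$ along a shortest $s$-$t$ path of $G_S$ to obtain a $(k+1)$-flow $S'$; because the augmenting path is shortest, the very same potentials $\pi$ still certify that every residual arc of $G_{S'}$ has reduced cost at least $0$, whence $d(e)\le\pi(v)-\pi(u)$ for every $e=(u,v)$ carrying flow in $S'$. Let $P_1,\dots,P_{k+1}$ be a decomposition of $S'$ into edge‑disjoint $s$-$t$ paths and let $G'$ be their union (the support of $S'$). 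Telescoping the inequality $d(e)\le\pi(v)-\pi(u)$ along walks shows that every $s$-$t$ walk in $G'$ has $\mathbf{d}$-length at most $\pi(t)-\pi(s)=\nu(\vecc)/k$ and every directed cycle of $G'$ has $\mathbf{d}$-length at most $0$; since $\vecc\le\mathbf{d}$, the same bounds hold for $\vecc$. Hence $G'$ has no positive‑length cycle and $\delta_{k+1}(P_1,\dots,P_{k+1},\vecc)\le\nu(\vecc)/k$, so $\delta_{k+1}(G,\vecc)\le\nu(\vecc)/k$, which is the claim.

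I expect the second paragraph to be the genuine obstacle: one must squeeze out of constraint~(3) --- which only provides one tight competitor per winning edge --- the far stronger conclusion that \emph{all} of $S$ lies simultaneously on shortest $\mathbf{d}$-paths in $G_S$. Once this ``$S$ is embedded in the shortest‑path structure of $\mathbf{d}$'' is secured, the remaining arguments are routine min‑cost‑flow and LP‑duality bookkeeping; the only other points requiring a little care are the harmless normalization of $S$ and the use of monopoly‑freeness to guarantee a $(k+1)$-flow.
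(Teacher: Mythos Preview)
Your proof is correct and reaches the same destination as the paper, but by a genuinely different route. The paper reduces the lemma to a purely combinatorial statement (its Theorem~\ref{them-nash-k-path}): if a weighted digraph has a min-cost $k$-flow $S$ of cost $L$ and for every edge $e$ there is a $k$-flow of cost $L$ avoiding $e$, then there exist $k+1$ edge-disjoint \emph{shortest} $s$-$t$ paths. The paper proves this by averaging the flows $\mathcal{F}_{G-e}$ to get a fractional min-cost $k$-flow with all edge values strictly below $1$, showing that every path in its support is shortest, and then invoking max-flow/min-cut to extract $k+1$ disjoint paths from that support.

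You instead work with the min-cost-flow LP directly: constraint~(2) makes $S$ optimal for $\mathbf d$, so feasible potentials $\pi$ exist; constraint~(3) then forces every edge of $S$ onto a zero-cost residual cycle, hence tight, giving $\nu(\vecc)=k(\pi(t)-\pi(s))$; a single shortest augmentation produces a $(k+1)$-flow whose support lives inside the tight subgraph, so every $s$-$t$ walk there has $\mathbf d$-length at most $\pi(t)-\pi(s)$. This is exactly the ``$k+1$ shortest paths'' conclusion, obtained via complementary slackness rather than flow averaging. Your approach is more self-contained (no auxiliary theorem) and makes the role of LP duality explicit; the paper's averaging argument is more elementary in the sense of avoiding potentials, and isolates a reusable structural theorem. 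Two minor points worth tightening: it is cleaner to take $\pi$ from LP duality rather than as shortest-path distances (finiteness at nodes off the augmenting path then needs no justification; though in fact reachability of all $S$-nodes follows by walking back along $Q_i$ from $t$), and the normalization of $S$ to exactly $k$ paths deserves a sentence explaining why this does not increase $\nu(\vecc)$.
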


Let $L(G, \vecc)$ be the length of the longest path in $G^*(\vecc)$,
where $G^*(\vecc)$ is the output of our pruning process on the bid vector $\vecc$.
Our second lemma gives an upper bound on the payment of our mechanism
in terms of $L(G, \vecc)$

\begin{lemma}\label{lem:k-path-upper}
For any $k$-paths system on a given graph $G$ with costs $\vecc$,
the total payment of \pathmechanism\ on $\vecc$ is at most
$\alpha(G^*(\vecc))L(G, \vecc)$.
\end{lemma}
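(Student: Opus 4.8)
The plan is to follow the proof of Theorem~\ref{vc-upper}, with the dependency graph $\calH(G^*(\vecc))$ playing the role that the input graph played there. Since \pathmechanism\ is truthful we have $\vecb=\vecc$; abbreviate $G^*=G^*(\vecc)$, $\calH=\calH(G^*)$, and let $S\in\calF^*$ be the winning set, i.e.\ a feasible set of minimum weight with respect to $b'(\cdot)=\frac{b(\cdot)}{w(\cdot)}$. As each winner is paid $p(e)=\min\{t_1(e),t_2(e)\}\le t_2(e)$, it suffices to bound $\sum_{e\in S}t_2(e)$. The structural fact I will rely on (noted for $k$-path systems just after Theorem~\ref{vc-upper}) is that in $\calF^*$ the feasible sets are exactly the vertex covers of $\calH$; since bids are nonnegative we may take $S$ to be inclusion-minimal, so that $S$ is the union of exactly $k$ edge-disjoint $s$-$t$ paths contained in $G^*$.

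First I would transport the payment estimate of Theorem~\ref{vc-upper} to $\calH$. Because $S$ is a minimum-weight vertex cover of $\calH$ and every vertex cover of $\calH$ is again feasible, replacing $e\in S$ by its $\calH$-neighbours outside $S$ produces a feasible set, so optimality of $S$ yields $b'(v_e)\le\sum_{e'e\in\calH,\,e'\notin S}b'(v_{e'})$, and hence the threshold satisfies $t_2(e)\le w(v_e)\sum_{e'e\in\calH,\,e'\notin S}\frac{c(e')}{w(v_{e'})}$. Summing over $e\in S$, exchanging the order of summation, and using the eigenvector identity $\sum_{e'e\in\calH}w(v_{e'})=\alpha_j\,w(v_e)\le\alpha(G^*)\,w(v_e)$ for the component $\calH_j$ containing $v_e$ (valid since every term is nonnegative), I obtain
\[
\sum_{e\in S}t_2(e)\ \le\ \alpha(G^*)\sum_{e\in\calE^*\setminus S}c(e).
\]

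It then remains to prove $\sum_{e\in\calE^*\setminus S}c(e)\le L(G,\vecc)$. Here I would use that $G^*$ is by construction the union of $k+1$ edge-disjoint $s$-$t$ paths while $S$ is a union of $k$ such paths: regarding the indicator of $E(G^*)$ as a unit-capacity integral $s$-$t$ flow of value $k+1$ and subtracting the value-$k$ flow carried by $S$, the difference is a unit flow supported exactly on $E(G^*)\setminus S$, and decomposing it shows that $\calE^*\setminus S=E(G^*)\setminus S$ is (contained in) the edge set of a single $s$-$t$ path of $G^*$. Since every $s$-$t$ path of $G^*$ has length at most $L(G,\vecc)$, this gives $\sum_{e\notin S}c(e)\le L(G,\vecc)$, and combining with the displayed inequality finishes the proof.

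The step I expect to be the main obstacle is this last one: the flow-difference argument naturally leaves behind a path \emph{together with some cycles}, so one must argue that these cycles carry zero $\vecc$-length --- which is where one invokes that the pruning step outputs a minimum-cost $(k+1)$-flow and that such a flow can be chosen acyclic, costs being nonnegative --- and one must check this is consistent with the deterministic, monotone, bid-independent rule actually used to select $G^*$. The $\calH$-level estimate, by contrast, is a routine adaptation of Theorem~\ref{vc-upper} once the ``feasible $=$ vertex cover of $\calH$'' correspondence is in hand.
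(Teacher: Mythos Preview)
Your argument is correct, and it is precisely the ``alternative proof'' the paper alludes to in Section~\ref{sec:vc} (the sentence immediately after Theorem~\ref{vc-upper}): you transplant the vertex-cover payment analysis to the dependency graph $\calH$ via the correspondence ``feasible sets in $\calF^*$ $=$ vertex covers of $\calH$'', and then identify the complement of the winning set with a single $s$--$t$ path in $G^*$.

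The paper's own proof takes a more concrete route. Rather than invoking the feasible/vertex-cover correspondence as a black box, it develops the structure of $\calH$ from scratch (Claims~\ref{cl1} and~\ref{cl2}, Lemma~\ref{cl3}): it shows that two edges are non-adjacent in $\calH$ iff some $s$--$t$ path in $G^*$ uses both, and that for any $v\notin P^*$ the $\calH$-neighbours of $v$ along the losing path $P^*$ form a contiguous segment $v_{i(v)},\dots,v_{j(v)}$. The threshold bound $t_2(v)\le w(v)\sum_{r=i(v)}^{j(v)} c(v_r)/w(v_r)$ is then obtained by explicitly constructing the competing path through $\mathcal L(v)\cup\{v\}$. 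After that, the eigenvector computation is identical to yours (using that $P^*$ is an independent set in $\calH$, so $N_\calH(v_r)\cap P^*=\emptyset$). In effect, the paper's structural claims are exactly what one would need to \emph{prove} the feasible/vertex-cover correspondence you cite, so your proof is shorter but leans on a fact whose justification would unpack into essentially the same work.

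On your flagged obstacle: the paper handles it with a one-line assertion at the start of the proof, namely that $G^*$, being a minimum-cost collection of $k{+}1$ edge-disjoint $s$--$t$ paths, is ``necessarily cycle-free''. With that, the complement of any vertex cover of $\calH$ is an independent set, hence (by Claim~\ref{cl1} and acyclicity) a chain under the natural edge order, hence contained in a single $s$--$t$ path --- which also disposes of your worry about $S$ being merely inclusion-minimal. Your instinct that zero-cost cycles and tie-breaking need a word is sound; the paper simply assumes this away, and your remark that nonnegative costs let one choose an acyclic minimum $(k{+}1)$-flow is the right fix.
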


Theorem~\ref{thm:alpha-nu} now follows from
Lemmas~\ref{lemma-nash-lower-bound} and~\ref{lem:k-path-upper}
and the observation that $G^*(\vecc)\in \calG^{k+1}$
and $L(G, \vecc)\le(k+1)\delta_{k+1}(G, \vecc)$ (see Appendix~\ref{appendix-k-path} for details).

In Section~\ref{path-lower-bound}, we show a lower bound of $\frac{\alpha_{k+1}}{k}$
on the frugality ratio of essentially any truthful $k$-path mechanism;
thus the frugality ratio of our mechanism is within a factor of $k+1$
from optimal. This gap leaves the question of whether
there is a better truthful mechanism for $k$-paths systems.
One might hope that a different pruning approach could lead to a smaller
frugality ratio. In particular, the proof of Theorem~\ref{thm:alpha-nu}
suggests that we could get a stronger result by pruning the graph
so as to minimize the length of the longest path $\delta_{k+1}(G^*, \vecc)$
in the surviving graph $G^*$.
While the argument above shows that --- under truthful bidding --- such mechanism
would have an optimal frugality ratio, unfortunately, it turns out
that this pruning process is not monotone~\cite{kempe-bug}.

We can, however, show that our mechanism is {\em optimal} with respect
to a weaker benchmark, namely, one that
corresponds to a buyer-pessimal rather than buyer-optimal
Nash equilibrium. This benchmark was introduced in~\cite{edith3},
and has been recently used by Kempe et al.~\cite{kempe}.
As argued
in~\cite{edith3} and~\cite{kempe}, unlike $\nu$, this benchmark enjoys natural
monotonicity properties and is easier to work with.

\begin{definition}[Benchmark $\mu(\mathbf{c})$~\cite{edith3}]\label{def-new-benchmark}
Given a set system $(\calE, \calF)$, and a feasible set $S\in \calF$ of
minimum total cost w.r.t. $\vecc$,
let $\mu(\mathbf{c})$ be the value of an
optimal solution to the following optimization problem:
\begin{eqnarray*}
\max \ \ & & \sum_{e\in S}b(e) \\
      \text{\em s.t.} \ \ & & (1) \ b(e) \ge c(e) \text{\em \ for all } e\in \calE \\
           & & (2) \ \sum_{e\in S\setminus T} b(e) \le \sum_{e\in T\setminus S} c(e)
                     \text{\em \ for all } \ T\in \calF \\
           & & (3) \ \text{\em For every $e\in S$ there is a $T\in \calF$ s.t. $e\notin T$ and }
                   \sum_{e'\in S\setminus T} b(e') = \sum_{e'\in T\setminus S} c(e')
\end{eqnarray*}
We will refer to the quantity $\sup_{\vecc}\frac{p_{_\mathcal{M}}(\vecc)}{\mu(\vecc)}$
as the {\em $\mu$-frugality ratio} of a mechanism $\calM$, where $p_{_\mathcal{M}}(\mathbf{c})$
is the total payment of mechanism $\mathcal{M}$ on a bid vector $\vecc$.
\end{definition}

The programs for $\nu(\vecc)$ and $\mu(\vecc)$ differ in their objective function only:
while $\nu(\vecc)$ minimizes the total payment, $\mu(\vecc)$ maximizes it.
In particular, this means
that in the program for $\mu(\vecc)$ we can omit constraint~(3), i.e., $\mu(\vecc)$ can be obtained
as a solution to a {\em linear} program.
Kempe et al.~\cite{kempe} show that the $\mu$-frugality ratio
of \pathmechanism\ is within a factor of $2(k+1)$ from optimal.
Our next result, combined with the observation
that our lower bound on the performance of ``all" truthful mechanisms
also holds for the $\mu$-frugality ratio,
shows that \pathmechanism\ is, in fact,
optimal with respect to $\mu$. The proof proceeds by constructing a bid vector $\vecy$
that satisfies constraints~(1) and~(2) in the definition of $\mu(\vecc)$ and pays at least
$kL(G, \vecc)$, and uses the observation that for any fixed network the cost
of a flow of size $x$ is a convex piecewise-linear function of $x$. Details of the proof are given in Appendix~\ref{appendix-mu-frugal}.

\begin{theorem}\label{thm:alpha-mu}
The $\mu$-frugality ratio of \pathmechanism\ is at most $\frac{\alpha_{k+1}}{k}$.
\end{theorem}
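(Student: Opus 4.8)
The plan is to compare the payment of \pathmechanism\ with $\mu(\vecc)$ by exhibiting a single bid vector $\vecy$ that is feasible for the (linear) program defining $\mu(\vecc)$ and whose total value on the efficient set $S$ is at least $kL(G,\vecc)$; combined with Lemma~\ref{lem:k-path-upper}, which bounds the payment from above by $\alpha(G^*(\vecc))L(G,\vecc)$, this yields a $\mu$-frugality ratio of at most $\alpha(G^*(\vecc))/k\le\alpha_{k+1}/k$. The key simplification, as noted right before the statement, is that constraint~(3) can be dropped for $\mu$, so $\vecy$ only needs to satisfy (1) $\vecy\ge\vecc$ coordinatewise and (2) $\sum_{e\in S\setminus T}y(e)\le\sum_{e\in T\setminus S}c(e)$ for every feasible $T\in\calF$. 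Under truthful bidding $\vecb=\vecc$, the pruning step produces $G^*(\vecc)$, a union of $k+1$ edge-disjoint $s$-$t$ paths, and $L(G,\vecc)$ is the length of its longest $s$-$t$ path w.r.t.\ $\vecc$.

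First I would set up the construction of $\vecy$. The natural idea is to ``lift'' the costs inside $G^*(\vecc)$ so that every one of the $k+1$ paths in $G^*(\vecc)$ has total $\vecy$-length exactly equal to the longest one, $L:=L(G,\vecc)$ — i.e.\ pad each shorter path up to length $L$ — while leaving $y(e)=c(e)$ for all edges outside $G^*(\vecc)$ (and, more carefully, outside the winning set $S$; since $S$ is a $k$-flow, $S$ can be taken inside $G^*(\vecc)$, so this is consistent). One must check that the padding can be distributed among the edges of each path so as to keep $y(e)\ge c(e)$; this is immediate since we are only increasing costs. Then $\sum_{e\in S}y(e)$: since $S$ decomposes into $k$ edge-disjoint $s$-$t$ paths each of $\vecy$-length $L$ (choosing $S$ among the $k$ cheapest of the $k+1$ paths), we get $\sum_{e\in S}y(e)=kL$, as desired.

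The main obstacle is verifying constraint~(2): for every feasible $T$ (a set containing $k$ edge-disjoint $s$-$t$ paths), $\sum_{e\in S\setminus T}y(e)\le\sum_{e\in T\setminus S}c(e)$. Equivalently, $\sum_{e\in S}y(e)-\sum_{e\in S\cap T}y(e)\le\sum_{e\in T}c(e)-\sum_{e\in S\cap T}c(e)$, and since $y\ge c$ on $S\cap T$ this would follow from $\sum_{e\in S}y(e)\le\sum_{e\in T}c(e)+\sum_{e\in S\cap T}(y(e)-c(e))$ — but that extra nonnegative term cuts the wrong way, so a direct term-by-term argument is delicate. Instead I would argue via flows and convexity: for any feasible $T$, $T$ contains a $k$-flow, and the $\vecc$-cost of the cheapest $k$-flow contained in $S\cup T$ is at least the $\vecc$-cost of the cheapest $k$-flow overall. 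The cleanest route is the one flagged in the excerpt — "for any fixed network the cost of a flow of size $x$ is a convex piecewise-linear function of $x$." Concretely, consider the network $S\cup T$ with costs $\vecc$; the cheapest $0$-flow has cost $0$, the cheapest $(k+1)$-flow in it has cost at least that of the cheapest $(k+1)$-flow in $G$ restricted to $G^*(\vecc)$ (which by minimality of the pruning is $\sum_{P_i}c(P_i)$), and convexity of the min-cost-flow-value function in the flow size lets me interpolate to conclude the needed inequality at flow size $k$. Translating "$y$-padding of $S$ up to $L$ per path" into a statement about min-cost flows of size $k$ in $S\cup T$, and showing the padded total $kL$ does not exceed $\sum_{e\in T\setminus S}c(e)+\sum_{e\in S\cap T}y(e)$, is where the real work lies; I expect to need both the defining minimality of the $(k+1)$-path pruning (so no cheaper $(k+1)$ edge-disjoint paths exist, hence $(k+1)L\ge$ any competitor's relevant cost) and the convexity/piecewise-linearity fact to push the bound from size $k+1$ down to size $k$. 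Once constraint~(2) is established, $\vecy$ witnesses $\mu(\vecc)\ge kL(G,\vecc)$, and dividing the bound of Lemma~\ref{lem:k-path-upper} gives the theorem.
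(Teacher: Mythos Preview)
Your plan has a genuine gap at the very first step of the construction of $\vecy$: you assert that the optimal feasible set $S$ in the definition of $\mu(\vecc)$ ``can be taken inside $G^*(\vecc)$'' and in fact as $k$ of the $k+1$ pruned paths. This is false. The set $S$ is not yours to choose; it must be a minimum-cost feasible set, i.e.\ a cheapest $k$-flow, and a cheapest $k$-flow need not be contained in the cheapest $(k{+}1)$-flow. A standard ``crossing'' example already for $k=1$ shows this: take edges $s\!\to\! u,\ u\!\to\! v,\ v\!\to\! t$ of cost $0$ and $s\!\to\! v,\ u\!\to\! t$ of cost $2$. The unique shortest path is $S=\{s\!\to\! u,\ u\!\to\! v,\ v\!\to\! t\}$, while the unique $2$-flow is $G^*=\{s\!\to\! u,\ u\!\to\! t,\ s\!\to\! v,\ v\!\to\! t\}$; here $u\!\to\! v\in S\setminus G^*$. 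With your padding (which only touches $G^*$) you get $\sum_{e\in S}y(e)=0$, not $kL=2$. So the objective value you need is simply not achieved by the proposed $\vecy$. Your sketch for verifying constraint~(2) via convexity in $S\cup T$ also presupposes this containment (e.g.\ $S\cup T$ may admit no $(k{+}1)$-flow at all), so the second half of the plan does not stand on its own either.

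The paper's argument avoids this pitfall by never assuming $f_k\subseteq G^*$. It raises the costs on the edges of the true cheapest $k$-flow $f_k$ one edge at a time, always maintaining (a) that $f_k$ stays a cheapest $k$-flow w.r.t.\ $\vecy$ --- which is exactly constraint~(2) --- and (b) that the gap $C_\vecy(k{+}1)-C_\vecy(k)$ is preserved. The nontrivial step is then to show that after this process $C_\vecy(x)$ is \emph{linear} on $[0,k{+}1]$: for this they average the ``tight'' witness flows $f(e)$ (each avoiding its edge $e$) to obtain a flow $f^*$ with strictly sub-unit values on every edge of $f_k$, so $f^*+\varepsilon f_k$ is feasible; comparing costs forces linearity. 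Linearity gives $\mu(\vecc)\ge C_\vecy(k)=k\bigl(C(k{+}1)-C(k)\bigr)\ge kL(G,\vecc)$, where the last inequality follows because $G^*$ decomposes into its longest path of cost $L(G,\vecc)$ plus a $k$-flow. Combined with Lemma~\ref{lem:k-path-upper} this yields the theorem. So the convexity intuition you invoke is indeed the right engine, but it must be applied to the global min-cost-flow function, with $\vecy$ built on $f_k$ rather than on $G^*$.
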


\section{Lower Bounds}\label{section-lower-bound}
We say that a mechanism
$\calM$ for a set system $(\calE, \calF)$ is {\em measurable}
if the payment $p(e)$ of any agent $e\in\calE$ is a Lebesque measurable
function of all agents' bids.
We will now use Young's inequality to
give a lower bound on total payments of any measurable
truthful mechanism with bounded frugality ratio.

\begin{theorem}[Young's inequality]
Let $f_1:[0, a]\to\mathbb{R}^+\cup\{0\}$,
$f_2:[0, b]\to\mathbb{R}^+\cup\{0\}$ be two Lebesgue measurable functions
that are bounded on their domain.
Assume that whenever $y>f_1(x)$ for some
$0< x\le a$, $0< y\le b$, we have $x\leq f_2(y)$. Then
$$
\int_0^a f_1(x)dx+\int_{0}^b f_2(y) dy\geq ab .
$$
\end{theorem}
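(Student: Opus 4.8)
The plan is to prove this via a direct geometric/measure-theoretic argument about the area of the region under $f_1$, rather than the classical Young's inequality for monotone functions (which does not apply here since $f_1, f_2$ need not be monotone). First I would reformulate the hypothesis: the condition ``whenever $y > f_1(x)$ we have $x \le f_2(y)$'' says that the open region $U = \{(x,y) : 0 < x \le a,\ 0 < y \le b,\ y > f_1(x)\}$ is contained in the region $V = \{(x,y): 0 < x \le a,\ 0 < y \le b,\ x \le f_2(y)\}$. In other words, for each fixed $y \in (0,b]$, the horizontal slice $\{x : (x,y) \in U\} \subseteq \{x : 0 < x \le a\} \cap \{x \le f_2(y)\}$, so its Lebesgue measure is at most $\min\{a, f_2(y)\} \le f_2(y)$.

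Next I would compute the measure of the rectangle $R = (0,a] \times (0,b]$ by splitting it along the graph of $f_1$. By Fubini/Tonelli (valid since $f_1$ is measurable and bounded, so the region is measurable), the measure of the ``lower'' part $L = \{(x,y) \in R : y \le f_1(x)\}$ equals $\int_0^a \min\{f_1(x), b\}\, dx \le \int_0^a f_1(x)\, dx$, integrating in $y$ first. The complement of $L$ within $R$ is exactly $U$ (up to the measure-zero graph of $f_1$), so $|U| = ab - |L| \ge ab - \int_0^a f_1(x)\, dx$. On the other hand, integrating $|U|$ by slicing in $x$ first (again Tonelli), $|U| = \int_0^b |\{x : (x,y) \in U\}|\, dy \le \int_0^b f_2(y)\, dy$ by the slice bound established above. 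Chaining the two gives $ab - \int_0^a f_1(x)\, dx \le \int_0^b f_2(y)\, dy$, which rearranges to the claim.

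The main technical obstacle is ensuring the measurability of the region $U$ (equivalently $L$) so that Tonelli's theorem applies in both orders of integration: this requires that $\{(x,y) : y \le f_1(x)\}$ is a measurable subset of $\mathbb{R}^2$, which follows from $f_1$ being Lebesgue measurable (the region is $\{(x,y): f_1(x) - y \ge 0\}$, a superlevel set of the measurable function $(x,y) \mapsto f_1(x) - y$). I would also need to be a little careful about whether the hypothesis is stated for $y > f_1(x)$ with strict inequality and about the treatment of points where $f_1(x) \ge b$ (there the lower slice is all of $(0,b]$ and $U$ has empty $x$-slice contribution, consistent with the bound $\min\{f_1(x),b\}$). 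A minor subtlety is that the hypothesis only controls $(x,y)$ with $x > 0$ and $y > 0$, but since we are computing Lebesgue integrals the boundary lines $x=0$ and $y=0$ contribute nothing, so this causes no loss. Everything else is routine bookkeeping with Tonelli and monotonicity of the integral.
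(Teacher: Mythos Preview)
Your argument is correct. The paper does not actually prove this statement; it cites Young's inequality as a known result and uses it as a black box in the proof of Lemma~\ref{lem:lower-bound}. So there is no ``paper's own proof'' to compare against.

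That said, your geometric proof via Tonelli is exactly the right way to handle this version of the inequality, which is more general than the textbook Young's inequality (where $f_2 = f_1^{-1}$ for a strictly increasing continuous $f_1$). The key observation---that the hypothesis forces the epigraph region $U$ above the graph of $f_1$ inside the rectangle to lie inside the hypograph region of $f_2$ (viewed with axes swapped)---is precisely what makes the area-splitting work without any monotonicity assumption. Your handling of the boundary issues (strict vs.\ non-strict inequalities, the clipping by $\min\{f_1(x),b\}$, and the measure-zero axes) is careful and correct. The only point worth a remark is the measurability of $L=\{(x,y)\in R: y\le f_1(x)\}$ in $\mathbb{R}^2$ when $f_1$ is merely Lebesgue (not Borel) measurable: this is the standard fact that the subgraph of a nonnegative Lebesgue-measurable function is Lebesgue measurable in the product, so Tonelli applies in both orders; you already flagged this, and it goes through.
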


Fix a set system $(\calE, \calF)$ with $|\calE|=n$ and let
$S_{(\calE, \calF)}\in \calS(\calE, \calF)$ be a subset with
$\alpha_{_S}=\alpha_{(\calE, \calF)}$.
For any $e\in S_{(\calE, \calF)}$, let $\vecc_{e, x}$ denote a bid vector
where $e$ bids $x$, all agents in $S_{(\calE, \calF)}\setminus\{e\}$ bid $0$,
and all agents in $\calE\setminus S_{(\calE, \calF)}$
bid $n+1$.

\begin{lemma}\label{lem:lower-bound}
For any set system $(\calE, \calF)$
and any measurable truthful mechanism $\calM$
with bounded frugality ratio, there exists
an agent $e\in S_{(\calE, \calF)}$
and a real value $0<x\le 1$ such that the total payment of $\calM$ on bid vector
$\vecc_{e, x}$ is at least $\alpha_{(\calE, \calF)} x$.
\end{lemma}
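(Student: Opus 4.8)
\emph{Proof plan.} I would argue by contradiction: assume that for every $e\in S:=S_{(\calE,\calF)}$ and every $x\in(0,1]$ the total payment $p_{\calM}(\vecc_{e,x})$ is strictly less than $\alpha x$, where $\alpha=\alpha_{(\calE,\calF)}$ (if $\alpha=0$ there is nothing to prove), and derive a contradiction. Fix a connected component $\calH'$ of $\calH_S$ whose largest eigenvalue is $\alpha$, let $S'\subseteq S$ be its vertex set, and let $\vecw=(w_e)_{e\in S'}$ be its Perron eigenvector, normalized so that $\max_{e\in S'}w_e=1$; recall $w_e>0$ for all $e\in S'$ and $A_{\calH'}\vecw=\alpha\vecw$. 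Since $\alpha x<\alpha\le |S'|-1<n+1$, no agent outside $S$ can win on $\vecc_{e,x}$ (it would be paid at least $n+1$), so the winning set is always a feasible subset of $S$. A quick argument (let $x\downarrow0$) shows that under our assumption the threshold bid of each $e\in S'$ in the profile underlying $\vecc_{e,x}$ must be $0$, so $e$ is a loser on $\vecc_{e,x}$ for every $x>0$. Because $\{e,e'\}$ is an edge of $\calH_S$ exactly when every feasible subset of $S$ meets $\{e,e'\}$, every $\calH_S$-neighbour $e'$ of $e$ then lies in the winning set, and by Theorem~\ref{theorem-monotone} is paid its threshold bid $T_{e'\to e}(x)$ --- the supremum of bids $y$ for which $e'$ wins when $e$ bids $x$, every other agent of $S$ bids $0$, and every agent outside $S$ bids $n+1$. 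Summing over the neighbours of $e$ gives the pointwise bound $\sum_{e'\sim e}T_{e'\to e}(x)\le p_{\calM}(\vecc_{e,x})<\alpha x$ for all $e\in S'$ and $x\in(0,1]$. $\quad(\star)$

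Next I would feed these threshold functions into Young's inequality, one edge at a time. For an edge $\{e,e'\}$ of $\calH'$, apply the theorem with $f_1=T_{e'\to e}$ on $[0,\epsilon w_e]$ and $f_2=T_{e\to e'}$ on $[0,\epsilon w_{e'}]$, for a small $\epsilon\in(0,1]$ fixed below. Its hypothesis --- if $y>T_{e'\to e}(x)$ then $x\le T_{e\to e'}(y)$ --- says that on the bid vector where $e$ bids $x$, $e'$ bids $y$, the rest of $S$ bids $0$ and the rest bids $n+1$, if $e'$ loses then $e$ wins; this follows from the structure of $\calH_S$ \emph{provided} the winning set stays inside $S$ (if both $e$ and $e'$ lose, the winning set is feasible, avoids $\{e,e'\}$, hence is not contained in $S$). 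I would secure this by shrinking $\epsilon$: manipulating the constraints defining $\nu$ and using monopoly-freeness of $S$ to bound each coordinate of the optimal $\vecb$ by $x+y$, one gets $\nu(\vecc)\le |S|(x+y)\le 2n\epsilon$ on all these bid vectors, so their total payment is at most $2n\epsilon\,\phi_{\calM}<n+1$ once $\epsilon<\tfrac{n+1}{2n\phi_{\calM}}$ --- legitimate because $\phi_{\calM}$ is finite. Measurability and boundedness of $T_{e'\to e}$, $T_{e\to e'}$ on the relevant intervals follow from the measurability hypothesis (expressing a threshold as $\int\mathbf{1}[p(\cdot)>0]$ and using Fubini) together with $(\star)$. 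Young then yields $\int_0^{\epsilon w_e}T_{e'\to e}(x)\,dx+\int_0^{\epsilon w_{e'}}T_{e\to e'}(y)\,dy\ge \epsilon^2 w_e w_{e'}$.

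Finally I would sum these inequalities over all edges of $\calH'$. Grouping the left side by the ``own-bid'' vertex turns it into $\sum_{b\in S'}\int_0^{\epsilon w_b}\big(\sum_{a\sim b}T_{a\to b}(x)\big)\,dx$, which by $(\star)$ is \emph{strictly} less than $\sum_{b\in S'}\int_0^{\epsilon w_b}\alpha x\,dx=\tfrac{\alpha\epsilon^2}{2}\sum_{b\in S'}w_b^2$ (strictness because each $w_b>0$). The right side equals $\epsilon^2\sum_{\{e,e'\}}w_e w_{e'}=\tfrac{\epsilon^2}{2}\,\vecw^{T}A_{\calH'}\vecw=\tfrac{\alpha\epsilon^2}{2}\sum_{b\in S'}w_b^2$, using $A_{\calH'}\vecw=\alpha\vecw$. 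This gives $\tfrac{\alpha\epsilon^2}{2}\sum w_b^2<\tfrac{\alpha\epsilon^2}{2}\sum w_b^2$, a contradiction. The decisive step --- and what I expect to be the real point of the proof --- is choosing the Perron eigenvector as the vector of integration limits: any other choice reproduces only the Rayleigh quotient $\vecw^{T}A\vecw/\lVert\vecw\rVert^2\le\alpha$, which is strict in general and would not close the loop; it is the equality case of Rayleigh's principle, combined with the strict inequality in $(\star)$, that forces the contradiction. The most delicate technical obstacle is verifying Young's hypothesis, i.e.\ confining the winning set to $S$ on the two-positive-bid profiles, which is why the argument must pass to sufficiently small bids rather than working on all of $(0,1]$.
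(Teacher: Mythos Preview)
Your proposal is correct and follows the same architecture as the paper's proof: assume the payment bound fails, show the positive-cost agent must lose, pair up the threshold functions across each edge of $\calH$ via Young's inequality, and sum using the eigenvector identity $A\vecw=\alpha\vecw$ to reach a contradiction.

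The one substantive difference is your $\varepsilon$-shrinking step. The paper, on the two-positive-bid profile ($v$ bids $x$, $u$ bids $y$, others $0$ or $n+1$), simply asserts ``agent $u$ loses and thus $v$ wins''; but ``thus'' only follows if the winning set stays inside $S$, which the paper never justifies for these profiles (its earlier argument that no outsider wins applies only to the single-positive-bid vectors $\vecc_{e,x}$, via the bound $\alpha x<n+1$). Your fix---bounding $\nu$ on the two-positive-bid profile by $|S|(x+y)$ via monopoly-freeness and then invoking the finite frugality ratio to force the payment below $n+1$ once $x,y\le\varepsilon$---is exactly what is needed to close this gap, and since the $\varepsilon^2$ factors cancel in the final eigenvector computation, the contradiction goes through unchanged. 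So your argument is the paper's argument, made rigorous at the one point where the paper is sloppy.
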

\begin{proof}
Set $S=S_{(\calE, \calF)}$, $\calH=\calH_S$, $A=A_S$,
$\alpha = \alpha_{_S} = \alpha_{(\calE, \calF)}$.
We will assume from now on that $\calH=(S, E(\calH))$ is connected;
if this is not the case, our argument can be applied without
change to the connected component of $\calH$ that corresponds to $\alpha$.
Let $\vecw = (w_v)_{v\in S}$ be the eigenvector
of $A$ that is associated with $\alpha$.
By normalization, we can assume that $\max_{v\in S}w_v=1$.

The proof is by contradiction: assume that there is
a truthful mechanism $\calM$ that
pays less than $\alpha x$ on any bid vector of the form
$\vecc_{e, x}$ for all $e\in S$ and all $0<x\le 1$.
Recall that for any such bid vector the cost of each agent
in $\calE\setminus S$ is $n+1$.
Since $\alpha\le n$ and $x\le 1$, this implies that
$\calM$ never picks any agents from $\calE\setminus S$
on any $\vecc_{e, x}$, i.e., effectively $\calM$ operates on $S$.
For any edge $vu$ of $\mathcal{H}$ and any $x>0$,
let $p_{uv}(x)$ denote the payment to $v$ on the bid vector $\vecc_{u, x}$.
Observe that measurability of $\calM$ implies that
$p_{uv}(x)$ is measurable
(since it is a restriction of a measurable function).
In this notation, our assumption can be restated as
\begin{equation}
 \sum_{uv\in E(\calH)}p_{uv}(x)< \alpha x
\end{equation}
for all $u\in S$ and any $0<x\le 1$.

It is easy to see that given a bid vector $\vecc_{u, z}$
with $z\le 1$, $\calM$ never selects $u$ as a winner.
Indeed, suppose that $u$ wins given $\vecc_{u, z}$.
Then by the truthfulness of $\mathcal{M}$, if we reduce $u$'s true cost
from $z$ to $0$, $u$ still wins and receives a payment of at least $z$.
Since the set system restricted to $S$ is monopoly-free, the resulting
cost vector $\vecc'$ satisfies conditions (1)--(3) in the definition of $\nu$,
and hence $\nu(\vecc')=0$.
Thus the frugality ratio of $\mathcal{M}$ is $+\infty$, a contradiction.
By the construction of $\mathcal{H}$,
this means that any $v\in S$ with $uv\in E(\calH)$ wins
given $\vecc_{u, z}$.

Now, fix some $x, y$ such that $0<x, y\le 1$ and
$y>p_{vu}(x)$, and
consider a situation where $v$ bids $x$, $u$ bids $y$, all agents in
$S\setminus\{u, v\}$ bid $0$, and all agents in $\calE\setminus S$ bid $n+1$.
Clearly, in this situation agent $u$ loses and thus $v$ wins
with a payment of at least $x$. By the truthfulness of $\calM$, the same holds
if $v$ lowers his bid to $0$.
Thus, for any $0<x, y\le 1$, $y>p_{vu}(x)$ implies
$p_{uv}(y)\geq x$.

By our assumption, we have  $p_{uv}(x)\le\alpha x$, $p_{vu}(x)\le\alpha x$
for $x\in[0, 1]$. Hence, for any $uv\in E(\calH)$ the functions $p_{uv}(x)$
and $p_{vu}(x)$ satisfy all conditions of
Young's inequality on $[0, 1]$.

Let $A=(a_{uv})_{u, v\in S}$, and consider the
scalar product $\langle \vecw,A\vecw\rangle=\langle \vecw,\alpha \vecw\rangle=
\alpha\langle \vecw,\vecw\rangle$.
We have $\langle \vecw,A\vecw\rangle=\sum_{uv\in E(\mathcal{H})}w_uw_v$.
As we normalized $\vecw$ so that $w_u, w_v\le 1$,
by Young's inequality, we can bound $w_uw_v$ by $\int_{0}^{w_u} p_{uv}(x) dx+\int_{0}^{w_v} p_{vu}(x) dx$.
Therefore,
\begin{eqnarray*}
\alpha\langle \vecw,\vecw\rangle &=&\sum_{u,v\in S} a_{uv} w_uw_v
\leq \sum_{u,v\in S} \left(\int_{0}^{w_u} a_{uv}p_{uv}(x) dx+\int_{0}^{w_v}
a_{uv}  p_{vu}(y)dy \right) \\
&=& 2\sum_{u\in S} \int_0^{w_u} \sum_{v\in S} a_{uv} p_{uv}(x)dx
< 2\alpha\sum_{u\in S} \int_{0}^{w_u} xdx
= \alpha \sum_{u\in S} w_u^2
= \alpha\langle \vecw,\vecw\rangle,
\end{eqnarray*}
where the last inequality follows from (1).
This is a contradiction, so the proof is complete.
\end{proof}
%

\subsection{Vertex Cover Systems}\label{vc-lower-bound}

For vertex cover systems, deleting any of the agents would result in
a monopoly. Therefore, Lemma~\ref{lem:lower-bound} simply says
that for any measurable truthful mechanism $\calM$ on a graph $G=(V, E)$,
there exists a $v\in V$ such that the total payment
on bid vector $x\cdot\vecc_v$ is at least $\alpha x$,
where  $\alpha$ is the largest eigenvalue of the adjacency matrix of $G$ and
$\vecc_v$ is the cost vector given by $c_v(u)=1$ if $u=v$,
and $c_v(u)=0$ if $u\in V\setminus \{v\}$.

Given a graph $G=(V, E)$ and a vertex $v\in V$, let $L_v$ denote
the set of all maximal cliques in $G$ that contain $v$.
Let $\rho_v$ denote the size of the smallest clique in $L_v$.

\begin{lemma}\label{lem:vc-lowerbound}
We have $\nu(x\cdot\vecc_v)\le x(\rho_v-1)$ for any $x>0$.
\end{lemma}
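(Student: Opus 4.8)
The plan is to exhibit an explicit feasible bid vector $\vecb$ for the program defining $\nu(x \cdot \vecc_v)$ whose objective value is exactly $x(\rho_v - 1)$. Recall that for vertex cover systems $\calE = V$, $\calF$ is the collection of all vertex covers, and the cost vector $\vecc_v$ has $c_v(v) = 1$, $c_v(u) = 0$ for $u \neq v$. The minimum-cost feasible set $S$ with respect to $\vecc_v$ is $V \setminus \{v\}$: this is a vertex cover (it covers every edge, since every edge has at least one endpoint $\neq v$), and its cost is $0$, which is optimal. So in the program for $\nu(x \cdot \vecc_v)$ we have $S = V \setminus \{v\}$, and we must minimize $\sum_{u \in S} b(u)$ subject to $b(u) \geq x \cdot c_v(u)$, to the ``undercut'' constraints (2) ranging over all vertex covers $T$, and to the tightness condition (3).

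First I would pick a smallest maximal clique $K \in L_v$ containing $v$, so $|K| = \rho_v$. Then I would set $b(u) = x$ for each of the $\rho_v - 1$ vertices in $K \setminus \{v\}$, and $b(u) = 0$ for every $u \in S$ outside $K$. The objective value is then $x(\rho_v - 1)$, matching the claim, so it remains to verify feasibility. Constraint (1) holds since $c_v(u) = 0$ for all $u \in S$ and $b \geq 0$. For constraint (2): fix any vertex cover $T$; I need $\sum_{u \in S \setminus T} b(u) \leq \sum_{u \in T \setminus S} c_v(u)$. The right-hand side is $x$ if $v \in T$ and $0$ if $v \notin T$. If $v \notin T$, then $T$ must contain every vertex adjacent to $v$, in particular all of $K \setminus \{v\}$, so $S \setminus T$ contains no vertex of $K$, forcing the left-hand side to be $0$; good. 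If $v \in T$, then since $K$ is a clique of size $\rho_v$, the set $K \setminus T$ has size at most one (a vertex cover can miss at most one vertex of any clique), so $S \setminus T$ contains at most one vertex of $K$ and the left-hand side is at most $x$; good. Finally, for constraint (3): for each $u \in K \setminus \{v\}$ with $b(u) = x > 0$, I need a vertex cover $T$ with $u \notin T$ and the undercut inequality tight. Take $T = V \setminus \{u\}$ — a vertex cover containing $v$ — then $S \setminus T = \{u\}$ so the left side equals $x$, and $T \setminus S = \{v\}$ so the right side equals $x$; tight. (For the vertices $u \in S$ with $b(u) = 0$, constraint (3) is either vacuous or trivially satisfiable, e.g. via the same $T = V\setminus\{u\}$ when $u \neq v$ is adjacent to some vertex, which holds since $G$ is monopoly-free hence every vertex cover omitting $u$ exists; in any case (3) only constrains the optimal solution and an optimal solution with these bids has all positive-bid coordinates handled above.)

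I expect the main subtlety to be the argument for constraint (2) in the case $v \in T$: the key observation is that a maximal clique of size $\rho_v$ can have at most one vertex outside any vertex cover, which is exactly why choosing the \emph{smallest} maximal clique in $L_v$ gives the tightest (smallest) objective. The choice of $K$ as a \emph{maximal} clique — as opposed to merely the closed neighborhood of $v$ — is what makes the bids satisfy all the undercut constraints simultaneously while keeping the objective as low as $x(\rho_v - 1)$; this is the place where the definition of $\rho_v$ (smallest maximal clique through $v$) is used essentially. Verifying (1) and (3) is routine once the clique is fixed.
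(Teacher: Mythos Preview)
Your approach matches the paper's: pick a smallest maximal clique $K\ni v$, bid $x$ on $K\setminus\{v\}$ and $0$ elsewhere, and verify feasibility. Your checks of (1) and (2) are correct. The gap is in condition (3) for the zero-bid vertices $u\in S\setminus K$. Your proposed witness $T=V\setminus\{u\}$ does \emph{not} give equality: with $S=V\setminus\{v\}$ one has $S\setminus T=\{u\}$ (left side $b(u)=0$) and $T\setminus S=\{v\}$ (right side $c(v)=x$), so $0\neq x$. The remark that ``(3) only constrains the optimal solution'' does not help; you are exhibiting a feasible point, so (3) must be verified for every $u\in S$.

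The fix, which is exactly what the paper does, distinguishes two cases. If $u$ is not adjacent to $v$, take $T=V\setminus\{u,v\}$: then $S\setminus T=\{u\}$, $T\setminus S=\emptyset$, and both sides are $0$. If $u$ is adjacent to $v$ but $u\notin K$, then by \emph{maximality} of $K$ there is some $w\in K\setminus\{v\}$ not adjacent to $u$; take $T=V\setminus\{u,w\}$, a vertex cover, and both sides equal $x$. So maximality of $K$ is needed precisely here, for condition (3), not for condition (2) as your closing paragraph suggests---condition (2) only uses that $K$ is a clique.
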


Combining Lemma~\ref{lem:vc-lowerbound} with Lemma~\ref{lem:lower-bound} yields the following result.

\begin{theorem}\label{thm:frug-vc}
For any graph $G$, the frugality ratio of any measurable truthful vertex cover
auction on $G$ is at least $\frac{\alpha}{\rho}$, where $\alpha$
is the largest eigenvalue of the adjacency matrix of $G$, and
$\rho=\max_{v\in V}\rho_v$.
\end{theorem}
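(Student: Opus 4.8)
The plan is to obtain Theorem~\ref{thm:frug-vc} by directly combining the vertex-cover specialization of Lemma~\ref{lem:lower-bound} with Lemma~\ref{lem:vc-lowerbound}; almost all of the work has already been done in those two lemmas, so this is essentially a one-line corollary plus bookkeeping. Let $\calM$ be a measurable truthful vertex cover mechanism on $G=(V,E)$. If $\alpha=0$ (equivalently, $G$ has no edges) the bound $\alpha/\rho=0$ is vacuous, and if $\phi_\calM=+\infty$ there is nothing to prove, so assume $\alpha>0$ and $\phi_\calM<+\infty$. Under the latter assumption, the first step is to invoke the form of Lemma~\ref{lem:lower-bound} stated at the beginning of this subsection: there exist a vertex $v\in V$ and a scalar $0<x\le 1$ such that the total payment of $\calM$ on the bid vector $x\cdot\vecc_v$ is at least $\alpha x$, where $\vecc_v$ is the $\{0,1\}$-cost vector supported on $v$.

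The second step is to bound the benchmark on this same instance. By Lemma~\ref{lem:vc-lowerbound} we have $\nu(x\cdot\vecc_v)\le x(\rho_v-1)$, and I must check that this is a positive quantity so that the frugality ratio on $x\cdot\vecc_v$ is well defined. To that end I would argue that the witness $v$ cannot be isolated in $G$: if it were, then the only maximal clique through $v$ is $\{v\}$, so $\rho_v=1$, Lemma~\ref{lem:vc-lowerbound} would force $\nu(x\cdot\vecc_v)=0$, and since $\calM$ pays $\alpha x>0$ this would make $\phi_\calM=+\infty$, contradicting our assumption. (One can also see this from the proof of Lemma~\ref{lem:lower-bound}, which is carried out inside the connected component of the dependency graph attaining $\alpha$, a component that must contain an edge because $\alpha>0$.) Hence $v$ lies in some maximal clique of size at least $2$, so $\rho_v\ge 2$ and $\nu(x\cdot\vecc_v)\le x(\rho_v-1)$ is strictly positive. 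The last step is the arithmetic: since $\rho_v\le\rho=\max_{u\in V}\rho_u$, we have $\rho_v-1\le\rho-1\le\rho$, and therefore
$$
\phi_\calM\ge\phi_\calM(x\cdot\vecc_v)=\frac{p_\calM(x\cdot\vecc_v)}{\nu(x\cdot\vecc_v)}\ge\frac{\alpha x}{x(\rho_v-1)}=\frac{\alpha}{\rho_v-1}\ge\frac{\alpha}{\rho-1}\ge\frac{\alpha}{\rho},
$$
which is the claimed bound (it in fact delivers the slightly sharper $\alpha/(\rho-1)$, and even the local estimate $\alpha/(\rho_v-1)$ at the witness vertex).

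I do not expect a real obstacle here, since the two lemmas we are allowed to assume carry all the substance; the points requiring attention are purely the bookkeeping just described --- disposing of the degenerate cases $\alpha=0$ and $\phi_\calM=+\infty$, and confirming $\nu(x\cdot\vecc_v)\ne 0$ (equivalently, that the witness $v$ is non-isolated) so that $\phi_\calM(x\cdot\vecc_v)$ makes sense. The one small structural observation to state explicitly is the passage from $\rho_v$ at the particular, a priori unidentified witness $v$ to the uniform quantity $\rho$, which is nothing more than the inequality $\rho_v\le\rho$ used in the final display.
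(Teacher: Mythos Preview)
Your proposal is correct and follows exactly the approach the paper intends: the theorem is stated immediately after ``Combining Lemma~\ref{lem:vc-lowerbound} with Lemma~\ref{lem:lower-bound} yields the following result,'' and your argument is precisely that combination, with the bookkeeping (degenerate cases $\alpha=0$, $\phi_\calM=+\infty$, and positivity of $\nu(x\cdot\vecc_v)$) filled in carefully. Your observation that the argument actually delivers the sharper bound $\alpha/(\rho_v-1)\ge\alpha/(\rho-1)$ is correct and worth noting.
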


The bound given in Lemma~\ref{lem:vc-lowerbound} is not necessarily optimal;
we can construct a family of graphs where for some vertex $v$
the quantity $\rho_v$ is linear in the size of the graph, while
$\nu(\vecc_v)=O(1)$. Nevertheless, Theorem~\ref{thm:frug-vc}
shows that the mechanism described in Section~\ref{sec:vc} has optimal
frugality ratio for, e.g., all triangle-free graphs and, more generally,
all graphs $G$ such that the for each vertex $v\in G$, the subgraph induced
on the neighbors of $v$ contains an isolated vertex.

\subsection{Multiple Path Systems}\label{path-lower-bound}

Let $(\calE, \calF)$ be a $k$-path system on a graph $G=(V, E)$.
Consider a set $S\in\calS(\calE, \calF)$ with
$\alpha_{_S}=\alpha_{(\calE, \calF)}$. It is not hard to see
that $S$ is a union of $k+1$ edge-disjoint paths;
this follows, e.g., from the proof
of Theorem~\ref{them-nash-k-path}.
Hence, we have $\alpha_{(\calE, \calF)}=\alpha_{k+1}$.

As before, for any $e\in S$, let $\vecc_{e, x}$ denote the cost vector
with $\vecc_{e, x}(e)=x$, $\vecc_{e, x}(u)=0$ for all $u\in S\setminus\{e\}$,
$\vecc_{e, x}(w)=n+1$ for all $w\in E\setminus S$. It is easy to see
that we have $\mu(\vecc_{e, x})= \nu(\vecc_{e, x}) =kx$ for any $e\in S$, $x>0$.
Combining this observation with Lemma~\ref{lem:lower-bound},
we obtain the following result.
\begin{theorem}\label{thm:frug-paths}
For any graph $G=(V, E)$,
both the frugality ratio
and the $\mu$-frugality ratio of any measurable truthful $k$-path auction
on $G$ are at least $\frac{\alpha_{k+1}}{k}$.
\end{theorem}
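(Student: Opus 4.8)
The plan is to derive both lower bounds directly from Lemma~\ref{lem:lower-bound} together with two auxiliary facts: (i) a subset $S\in\calS(\calE,\calF)$ achieving $\alpha_{_S}=\alpha_{(\calE,\calF)}$ is itself a union of $k+1$ edge-disjoint $s$-$t$ paths, so that $\alpha_{(\calE,\calF)}=\alpha_{k+1}$; and (ii) for the specific cost vectors $\vecc_{e,x}$ used in Lemma~\ref{lem:lower-bound}, the two benchmarks coincide and equal $kx$, i.e. $\mu(\vecc_{e,x})=\nu(\vecc_{e,x})=kx$. Granting these, Lemma~\ref{lem:lower-bound} hands us an agent $e$ and a value $0<x\le 1$ on which any measurable truthful mechanism pays at least $\alpha_{k+1}x$; dividing by $\nu(\vecc_{e,x})=kx$ (resp. $\mu(\vecc_{e,x})=kx$) gives frugality ratio (resp. $\mu$-frugality ratio) at least $\alpha_{k+1}/k$.

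First I would establish (i). The set system restricted to $S$ must be monopoly-free and must have a dependency graph $\calH_S$ whose largest-eigenvalue is maximal over $\calS(\calE,\calF)$; I would argue that any such $S$ can be taken to be \emph{minimal} with respect to inclusion among monopoly-free subsets of the same $\alpha$-value, and that a minimal monopoly-free subset of a $k$-path system carrying exactly $k$ edge-disjoint paths would be a monopoly, while one carrying $k+2$ or more contains a proper monopoly-free subsystem of no smaller eigenvalue — hence exactly $k+1$ edge-disjoint paths, as claimed (this is the content of the structural analysis in the proof of Theorem~\ref{them-nash-k-path}, which I may cite). In particular $\alpha_{(\calE,\calF)}=\alpha_{k+1}$.

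Next I would verify (ii). Fix $e\in S$ and the cost vector $\vecc_{e,x}$: the agent $e$ has cost $x$, all other agents of $S$ have cost $0$, all agents outside $S$ have cost $n+1$. Since $S$ is a union of $k+1$ edge-disjoint $s$-$t$ paths and $e$ lies on exactly one of them, the cheapest feasible set $S'$ (a union of $k$ edge-disjoint paths) is obtained by discarding the path through $e$ and keeping the other $k$ paths, which costs $0$; so the efficient set in the benchmark programs has total cost $0$ and does not contain $e$. For both $\nu$ and $\mu$, constraint~(1) forces every bid to be at least the corresponding cost; the bids on $S'$ are all $0$, the binding competitor for each winner on $S'$ is precisely the path through $e$ (cost exactly $x$ on $e$, $0$ elsewhere), and constraints~(2) and~(3) together pin each of the $k$ ``parallel'' segments to bid a total of $x$, giving $\sum_{f\in S'}b(f)=kx$. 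Because this value is simultaneously forced from below (by~(1)–(2)) and achieved, both the minimization defining $\nu$ and the maximization defining $\mu$ return $kx$. (One must check monopoly-freeness of the restriction to $S$, which is exactly condition Monopoly-freeness in the pruning requirements and holds since $S$ is a union of $k+1\ge 2$ edge-disjoint paths.)

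The one step that needs the most care is the computation in (ii) that \emph{both} benchmarks equal $kx$ — in particular that the maximization $\mu$ does not exceed $kx$. This follows from constraint~(2): taking $T=S'$ itself gives $\sum_{f\in S'\setminus S'}b(f)=0$, which is vacuous, so instead one takes, for each of the $k$ remaining paths, the feasible competitor $T$ that swaps that path out for the path through $e$; constraint~(2) then bounds the bid of that path by $x$, and summing over the $k$ disjoint paths yields $\sum_{f\in S'}b(f)\le kx$. Combined with the lower bound from the tightness requirement~(3) (or simply from exhibiting the bid vector that attains $kx$ and satisfies (1)–(3)), we get equality. With (i) and (ii) in hand, the theorem is immediate from Lemma~\ref{lem:lower-bound}, noting that the lemma's lower bound on payment holds verbatim whether we normalize by $\nu$ or by $\mu$, since here the two agree.
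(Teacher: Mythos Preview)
Your approach is essentially the same as the paper's: reduce to Lemma~\ref{lem:lower-bound} via the two facts (i) $\alpha_{(\calE,\calF)}=\alpha_{k+1}$ and (ii) $\mu(\vecc_{e,x})=\nu(\vecc_{e,x})=kx$, exactly as the paper does in the paragraph preceding the theorem. The paper, like you, leaves both (i) and (ii) as brief assertions (citing the proof of Theorem~\ref{them-nash-k-path} for (i) and calling (ii) ``easy to see'').

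One caution on your sketch of (i): the claim that a monopoly-free subset carrying $k+2$ or more edge-disjoint paths ``contains a proper monopoly-free subsystem of no smaller eigenvalue'' is not justified as written. Passing to an induced subgraph of $\calH_S$ can only \emph{lower} the top eigenvalue (interlacing), and although the dependency graph of a subset $S'\subsetneq S$ may gain edges relative to $\calH_S|_{S'}$, this does not directly yield $\alpha_{S'}\ge\alpha_S$. A cleaner way to proceed---and arguably what the paper has in mind---is to note that the \emph{proof} of Lemma~\ref{lem:lower-bound} works verbatim for any $S\in\calS(\calE,\calF)$, yielding a payment lower bound of $\alpha_S x$; you may then simply take $S\in\calG^{k+1}$ achieving $\alpha_{k+1}$, for which your computation in (ii) applies directly. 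This sidesteps the need to show that the global maximizer over $\calS(\calE,\calF)$ is itself a $(k+1)$-flow.

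Your swap argument for $\mu(\vecc_{e,x})\le kx$ is correct once the winning set is chosen to be the union of the $k$ paths not through $e$; your exhibition of a bid vector attaining $kx$ then pins down $\mu=kx$ and gives $\nu\le kx$. For $\nu\ge kx$ your appeal to constraint~(3) is vague; the cleanest route is Lemma~\ref{lemma-nash-lower-bound}, which gives $\nu(\vecc_{e,x})\ge k\,\delta_{k+1}(G,\vecc_{e,x})=kx$.
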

In Section~\ref{section-multi-path}, we show that the
frugality ratio and the $\mu$-frugality ratio of \pathmechanism\
are bounded by, respectively, $\alpha_{k+1}\frac{k+1}{k}$ and $\frac{\alpha_{k+1}}{k}$.
Together with Theorem~\ref{thm:frug-paths} this implies that \pathmechanism\
has optimal $\mu$-frugality ratio, and its frugality ratio is within a factor of
$(k+1)$ from optimal.

\section{Conclusions and Open Problems}

In this paper, we propose a uniform scheme for designing frugal truthful mechanisms.
We show that several existing mechanisms can be viewed as instantiations of our scheme,
and describe its applications to $k$-path systems and vertex cover systems.
We demonstrate that our scheme produces mechanisms with good frugality ratios
for $k$-path systems and a large subclass of vertex cover systems;
for $k$-path systems, we show that our mechanism has the optimal frugality ratio.
Moreover, all mechanisms described in this paper are polynomial-time computable.
We believe that our scheme can be applied to many other set systems, resulting in
mechanisms with near-optimal frugality ratios.

It would be interesting to understand the limits of applicability of our scheme.
Indeed, for some set systems the minimal monopoly-free subsystem does not
necessarily exhibit a lot of connections between agents, i.e., the corresponding
dependency graph is rather sparse. It seems that for such cases our scheme does
not produce mechanisms with good frugality ratio. Formalizing this intuition
and developing alternative approaches for designing frugal mechanisms in such settings
is an interesting research direction.

\section{Acknowledgements}

We thank David Kempe, Christos Papadimitriou, and Yaron Singer for helpful discussions.

\newpage

\appendix

\section{Proof of Theorem~\ref{theorem-truthful}}

\begin{proof}
For any agent $e\in \calE$ and given bids of other agents, we will analyze the utility of $e$ in terms of his
bid. There are the following two cases.
\begin{description}
\item[Case 1.] Agent $e$ is not dropped out in the pruning process when bidding $b(e)=c(e)$, i.e. $e\in
\calE^*$.
By the definition of $t_1(e)$, we know that $t_1(e)\ge c(e)$. Consider the situation where $e$ bids another
value $b'(e)\neq b(e)$. If $b'(e)>t_1(e)$, then $e\notin \calE^*$ and his utility is 0. If $b'(e)\le t_1(e)$,
by
the bid-independence property, we know that the subset $\calE^*$ remains the same. Given this fact, the
structure
of the graph $\mathcal{H}$ does not change, which implies that the eigenvectors and eigenvalues of its
adjacency matrix do not change either. Hence,
the threshold value $t_2(e)$ will not change, which implies that the payment to agent $e$,
$p(e)=\min\{t_1(e),t_2(e)\}$, will not change.

\item[Case 2.] Agent $e$ is dropped out in the pruning process when bidding $b(e)=c(e)$, i.e. $e\notin
\calE^*$. Consider the situation where $e$ bids another value $b'(e)\neq b(e)$ and is not dropped out. By
monotonicity and bid-independence, we know that $b'(e)\le t_1(e) \le b(e)=c(e)$. Hence, even though
$e$ could be a winner by bidding $b'(e)$, his payment is at most $t_1(e)\le c(e)$, which implies that he
cannot obtain a positive utility.
\end{description}
From the above two cases, we know that the utility of each agent is maximized by bidding his true cost, and
hence the mechanism is truthful.
\end{proof}

\section{Analysis of \pathmechanism}\label{appendix-k-path}

\subsection{Lower Bound on $\nu$}
Our analysis of \pathmechanism\ relies on the characterization
of Nash flows presented in~\cite{ningnick}.

\begin{theorem}[\cite{ningnick}]\label{them-nash-k-path}
Let $G = (V,E)$ be a directed graph with weight $w(e)$ on each edge $e\in E$. Given two
specific vertices
$s,t\in V$, assume that there are $k$ edge-disjoint paths from $s$ to $t$. Let
$P_1,P_2,\cdots, P_k$ be
such $k$ edge-disjoint $s$-$t$ paths so that its total weight $L\triangleq
\sum_{i=1}^{k}w(P_i)$ is
minimized, where $w(P_i)=\sum_{e\in P_i}w(e)$. Further, it is known that for every edge
$e\in E$, the graph
$G-\{e\}$ has $k$ edge-disjoint $s$-$t$ paths with the same total weight $L$. Then
essentially there exist
$k+1$ edge-disjoint $s$-$t$ paths in $G$ such that each of them is a shortest path from
$s$ to $t$.
\end{theorem}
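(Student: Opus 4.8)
The plan is to recast the statement as a minimum-cost-flow problem and push it through LP duality. A family of $k$ edge-disjoint $s$-$t$ paths of minimum total weight is exactly a path decomposition of an integral minimum-cost flow $f^*$ of value $k$ in the unit-capacity network $(G,w)$; its value is $L$, and (discarding any zero-weight cycles in the decomposition — this is what the word ``essentially'' absorbs) we get $P_1,\dots,P_k$. Since $G-e$ is a subgraph of $G$, the hypothesis is precisely the assertion that \emph{every edge is non-essential}: for each $e$, the minimum cost of a $k$-flow in $G-e$ is still $L$. The target is $k+1$ edge-disjoint $s$-$t$ paths each of length $d(s,t)$, the shortest-path distance.

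First I would show that the non-essentiality hypothesis kills all the slack in the dual certificate. Take an optimal dual solution: a potential $\pi\colon V\to\mathbb R$ with $\pi(t)=0$ and capacity slacks $y(e)\ge 0$, so that, writing $\bar w(e)=w(e)+\pi(v)-\pi(u)$ for $e=(u,v)$, complementary slackness gives $\bar w(e)\ge 0$ on unused edges, $\bar w(e)=-y(e)$ on used edges, and the dual value is $k\pi(s)-\sum_e y(e)=L$. The same $\pi$ stays dual-feasible on $G-e$ (deleting an edge only deletes a constraint), with dual value $L+y(e)$; weak duality against the optimum $L$ of $G-e$ forces $y(e)\le 0$, hence $y(e)=0$. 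So $y\equiv 0$: every edge has $\bar w(e)\ge 0$, and $\bar w(e)=0$ on every edge carried by $f^*$.

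Next I would pass to the shortest-path subgraph $D=\{e:\bar w(e)=0\}$. Telescoping gives $w(Q)=\pi(s)+\sum_{e\in Q}\bar w(e)$ for every $s$-$t$ path $Q$, so $s$-$t$ paths inside $D$ all have length exactly $\pi(s)$ while every $s$-$t$ path has length $\ge\pi(s)$; since $P_1$ is an $s$-$t$ path, $\pi(s)=d(s,t)$ and $D$ is exactly the set of edges on shortest $s$-$t$ paths. By the previous step $f^*\subseteq D$, so $D$ already carries $k$ edge-disjoint $s$-$t$ paths and its minimum $s$-$t$ edge cut is $\ge k$. To upgrade to $k+1$ I would invoke Menger: if the minimum cut of $D$ were exactly $k$, pick an edge $e$ of such a cut $C$. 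On $G-e$ the pair $(\pi,0)$ is still dual-feasible with value $k\pi(s)=L$, hence dual-optimal; complementary slackness then forces every min-cost $k$-flow $g$ of $G-e$ to use only $\bar w=0$ edges, so $g\subseteq D-e$ — contradicting the fact that $D-e$ has an $s$-$t$ cut of size $k-1$. Therefore $D$ has minimum cut $\ge k+1$, and Menger produces $k+1$ edge-disjoint $s$-$t$ paths inside $D$, each necessarily a shortest path.

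The step I expect to be the crux is making the non-essentiality hypothesis propagate all the way into the dual certificate (the slacks $y$ vanish) and, more delicately, observing that the \emph{same} potential $\pi$ remains dual-optimal after deleting any single edge — this is exactly what lets the Menger step ``see'' the shortest-path structure inside each $G-e$. A purely combinatorial alternative, which I would fall back on if the LP route felt too heavy, is to use the residual-graph optimality criterion (a $k$-flow is min-cost iff its residual graph has no negative cycle), superpose $f^*$ with a $k$-flow avoiding $e$, and decompose the difference into zero-cost residual cycles; the sign bookkeeping there is no lighter. I would also record the easy sanity check that $G$ cannot have fewer than $k+1$ edge-disjoint $s$-$t$ paths: a cut of size $k$ would be saturated by $f^*$, so deleting one of its edges would destroy every $k$-flow, contradicting the hypothesis.
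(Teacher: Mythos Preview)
Your argument is correct, and it takes a genuinely different route from the paper's. The paper works primally: it averages, over all edges $e$, a min-cost $k$-flow $\mathcal{F}_{G-e}$ avoiding $e$, obtaining a fractional min-cost $k$-flow $\mathcal{F}$ in which every edge carries strictly less than $1$ unit. A direct rerouting argument then shows that every $s$-$t$ path in the support $E_+$ of $\mathcal{F}$ is shortest, and the ``every edge carries $<1$'' property forces any $s$-$t$ cut to meet $E_+$ in at least $k+1$ edges, after which Ford--Fulkerson yields the $k+1$ disjoint shortest paths. Your proof instead works dually: the non-essentiality hypothesis, fed through weak duality on each $G-e$, annihilates the capacity slacks $y$, so the optimal potential $\pi$ certifies $\bar w\ge 0$ everywhere with equality on $f^*$; the zero-reduced-cost subgraph $D$ then replaces $E_+$, and the observation that $(\pi,0)$ stays dual-\emph{optimal} on each $G-e$ drives the Menger step via complementary slackness.

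What each buys: the paper's averaging trick is self-contained and avoids invoking LP duality or complementary slackness, at the cost of the rerouting argument and the somewhat delicate ``$<1$'' bookkeeping. Your approach is shorter and pinpoints exactly where the hypothesis acts (it kills $y$); it also makes transparent that the paths $P_1,\dots,P_k$ are themselves shortest paths, since $k\pi(s)=L$. One small imprecision: your claim that ``$D$ is exactly the set of edges on shortest $s$-$t$ paths'' need not hold for an arbitrary optimal potential $\pi$ --- $D$ may contain $\bar w=0$ edges not lying on any $s$-$t$ path --- but your argument never uses this direction, only that every $s$-$t$ path inside $D$ has length $\pi(s)=d(s,t)$, which is what the telescoping gives.
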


We will first show how Theorem~\ref{them-nash-k-path} implies Lemma~\ref{lemma-nash-lower-bound}.
For completeness, we will then present a proof of Theorem~\ref{them-nash-k-path};
this proof also appears in~\cite{ningnick}.

\begin{proof}[Proof of Lemma~\ref{lemma-nash-lower-bound}]
Fix a cost vector $\vecc$.
Let $E'$ be the winning set with respect to $\vecc$,
and consider a bid vector $\vecb$ that satisfies
conditions (1)--(3) in the definition of $\nu(\vecc)$. Let $p(\mathbf{b})$
denote the total payment under $\vecb$.
The set $E'$ contains $k$ edge-disjoint $s$-$t$ paths. By condition (2),
no agent in $E'$ can obtain more revenue by increasing his bid.
That is, for any $e\in E'$, there are $k$ edge-disjoint $s$-$t$ paths in
$G\setminus\{e\}$ with the same total bid as $E'$.
Applying Theorem~\ref{them-nash-k-path} with $w(e)=b(e)$,
we obtain that there are $k+1$ edge-disjoint shortest $s$-$t$
paths with length $\frac{p(\mathbf{b})}{k}$ each w.r.t $\mathbf{b}$.
Consider the subgraph $G'$ composed by these $k+1$ edge-disjoint paths.
We know that $\delta_{k+1}(G', \vecc) \le \frac{p(\mathbf{b})}{k}$
as $b(e)\ge c(e)$ for any edge $e$,
i.e. the longest $s$-$t$ path in $G'$ w.r.t to $\mathbf{c}$
is at most $\frac{p(\mathbf{b})}{k}$.
Hence,
\[
p(\mathbf{b}) \ge k\cdot \delta_{k+1}(G', \vecc) \ge k\cdot \delta_{k+1}(G, \vecc).
\]
As this holds for any vector $\vecb$ that satisfies conditions (1)--(3),
it follows that $\nu(\vecc)\ge \delta_{k+1}(G, \vecc)$.
\end{proof}

In the rest of this section, we prove Theorem~\ref{them-nash-k-path}.
\begin{proof}[Proof of Theorem~\ref{them-nash-k-path}]
Given the graph $G$ and integer $k$, we construct a flow network $\mathcal{N}_k(G)$ as
follows: We introduce two extra nodes $s_0$ and $t_0$ and two edges $s_0s$ and $tt_0$. The
set of vertices of $\mathcal{N}_k(G)$ is $V\cup \{s_0,t_0\}$ and the set of edges is
$E\cup \{s_0s, tt_0\}$. The capacity $cap(\cdot)$ and cost per bulk capacity $cost(\cdot)$
for each edge in $\mathcal{N}_k(G)$ is defined as follows:
\begin{itemize}
\item $cap(s_0s) = cap(tt_0) = k$ and $cost(s_0s) = cost(tt_0) = 0$.
\item $cap(e) = 1$ and $cost(e) = w(e)$, for $e\in E$.
\end{itemize}

Given the above construction, every path from $s$ to $t$ in $G$ naturally corresponds to a
bulk flow from $s_0$ to $t_0$ in $\mathcal{N}_{k}(G)$. Hence, the set of $k$ edge-disjoint
paths $P_1,P_2, \ldots, P_k$ in $G$ corresponds to a flow $\mathcal{F}_{G}$ of size $k$ in
$\mathcal{N}_{k}(G)$. In addition, the minimality of $L = \sum_{i=1}^{k}w(P_i)$ implies
that $\mathcal{F}_{G}$ achieves the minimum cost (which is $L$) for all
\textit{integer}-valued flows of size $k$, i.e. maximum flow in $\mathcal{N}_{k}(G)$.
Since all capacities of $\mathcal{N}_{k}(G)$ are integers, we can conclude that
$\mathcal{F}_{G}$ has the minimum cost among all \textit{real} maximum flows in
$\mathcal{N}_{k}(G)$. (The proof of this fact can be found in,
e.g.,~\cite{MinCostMaxFlow}.)

For simplicity, we denote the subgraph $G-\{e\}$ by $G-e$. By the fact that for any $e\in
E$, the subgraph $G-e$ has $k$ edge-disjoint $s$-$t$ paths with the same total weight $L$,
we know that in the network $\mathcal{N}_{k}(G-e)$, there still is an integer-valued flow
$\mathcal{F}_{G-e}$ of size $k$ and cost $L$. So $\mathcal{F}_{G- e}$ is also an
integer-valued flow of size $k$ and cost $L$ in $\mathcal{N}_{k}(G)$. Define a real-valued
flow in $\mathcal{N}_{k}(G)$ by $\mathcal{F} = \frac{1}{|E|}\sum_{e\in E}\mathcal{F}_{G -
e}$. We have the following observations:
\begin{enumerate}
\item It is clear that $\mathcal{F}(e) \le cap(e)$ for every arc $e\in
\mathcal{N}_{k}(G)$, where $\mathcal{F}(e)$ is the amount of flow on edge $e$ in
$\mathcal{F}$, as we have taken the arithmetic average of the flows in the network.

\item $\mathcal{F}$ has cost $\frac{1}{|E|}\sum\limits_{e\in E} cost(\mathcal{F}_{G - e})
= \frac{1}{|E|}\cdot |E|\cdot L = L$.

\item Since $\mathcal{F}_{G-e}(s_0s) = k$ for any $e\in E$, we have $\mathcal{F}(s_0s) =
k$. In addition, as each $\mathcal{F}_{G-e}$ is a feasible flow that satisfies all
conservation conditions and $\mathcal{F}$ is defined by the arithmetic average of all
$\mathcal{F}_{G-e}$'s, we know that $\mathcal{F}$ also satisfies all conservation
conditions.
\end{enumerate}
Therefore, $\mathcal{F}$ is a minimum cost maximum flow in $\mathcal{N}_{k}(G)$. In
addition, $\mathcal{F}$ has the following nice property, which plays a fundamental role
for the proof:
\begin{itemize}
\item \label{Has more flow}
For every edge $e\in \mathcal{N}_{k}(G)$ except $s_0s$ and $tt_0$, we have
$\mathcal{F}(e) \leq cap(e) - \frac{1}{|E|}$, as $\mathcal{F}_{G - e}$ does not flow
through $e$, i.e. $\mathcal{F}_{G - e}(e) = 0$, and $\mathcal{F}_{G - e'}(e)$ is either 0
or 1 for any $e'\in E$.
\end{itemize}

Let $E_{+}=\{e\in\mathcal{N}_{k}(G)\mid\mathcal{F}(e)>0\}$. Suppose that there is a path
$P' = (e_1,e_2,\ldots, e_r)$ from $s_0$ to $t_0$ which goes only along arcs in $E_{+}$ and
is not a shortest path w.r.t $cost(\cdot)$ from $s_0$ to $t_0$ in $\mathcal{N}_{k}(G)$.
Let $\epsilon = \min\left\{\mathcal{F}(e_1),\mathcal{F}(e_2),\ldots,\mathcal{F}(e_r),
\frac{1}{|E|}\right\}$. Since $P'\subseteq E_{+}$, we have $\epsilon > 0$.
Let $P$ be a shortest path w.r.t $cost(\cdot)$ from $s_0$ to $t_0$ in
$\mathcal{N}_{k}(G)$. Define a new flow $\mathcal{F'}$ from $\mathcal{F}$ by adding
$\epsilon$ amount of flow on path $P$ and removing $\epsilon$ amount of flow from path
$P'$. We have the following observations about $\mathcal{F'}$:
\begin{enumerate}
\item The size of flow $\mathcal{F'}$ is $k$.

\item $\mathcal{F'}$ satisfies all conservation conditions as it is a linear combination
of three flows from $s_0$ to $t_0$.

\item By the definition of $\epsilon$, the amount of flow of each edge is non-negative in
$\mathcal{F'}$. Further, $\mathcal{F'}$ satisfies the capacity constrains. This follows
from the facts that $\epsilon \le \frac{1}{|E|}$ and the above property established for
$\mathcal{F}$.

\item The cost of $\mathcal{F'}$ is smaller than $L$ because $cost(\mathcal{F'}) =
cost(\mathcal{F}) - \epsilon(cost(P')-cost(P))$, which is smaller than
$L=cost(\mathcal{F})$ as $cost(P)<cost(P')$ by the assumption.
\end{enumerate}
Hence, $\mathcal{F'}$ is a flow of size $k$ in $\mathcal{N}_{k}(G)$ with cost smaller than
$\mathcal{F}$, a contradiction.
Thus, every path from $s_0$ to $t_0$ in $\mathcal{N}_{k}(G)$ along the edges in $E_{+}$ is
a shortest path w.r.t $cost(\cdot)$.

Consider a new network $\mathcal{N'}_{k + 1}(G)$ obtained from $\mathcal{N}_{k + 1}(G)$ by
restricting edges in $E_{+}$. (Note that the only difference between $\mathcal{N}_{k +
1}(G)$ and $\mathcal{N}_{k}(G)$ is the capacity on edges $s_0s$ and $tt_0$ is $k+1$ rather
than $k$.) We claim that in this network there is an integer-valued flow of size $k+1$.
Suppose otherwise, by Ford-Fulkerson Theorem~\cite{FF}, there is a cut $(S_{s_{0}},
T_{t_{0}})$ in $\mathcal{N'}_{k + 1}(G)$ with size less than or equal to $k$. By
definition, in $\mathcal{N'}_{k + 1}(G)$ we have $cap(s_0s) = k + 1$ and $cap(tt_0) =
k+1$, which implies that $s_0,s\in S_{s_{0}}$ and $t_0,t\in T_{t_{0}}$.
By the definition of $E_+$, we know that the total amount of flows of $\mathcal{F}$ on the
cut $(S_{s_{0}},T_{t_{0}})$ is $k$. Since $\mathcal{F}(e)< 1$ for any edge $e$, we can
conclude that there are at least $k+1$ edges from $S_{s_{0}}$ to $T_{t_{0}}$ in $E_{+}$.
This leads to a contradiction, because we have showed that the size of the cut
$(S_{s_{0}}, T_{t_{0}})$ is less than or equal to $k$.

Therefore, we can find an integer-valued flow of size $k+1$ on edges in $E_{+}$ in the
network $\mathcal{N}_{k + 1}(G)$. Such a flow can be thought as the union of $k+1$
edge-disjoint paths from $s_0$ to $t_0$. We know that every such path going along edges in
$E_{+}$ is a shortest path from $s_0$ to $t_0$. This in turn concludes the proof, since we
have found $k+1$ edge-disjoint shortest paths from $s$ to $t$ in $G$.
\end{proof}

\begin{figure}[ht]
\begin{center}
\include{2paths-arxiv}
\caption{An instance of $1$-path problem on which
\pathmechanism\ does not choose the path that minimizes $\delta_2(G, \vecc)$}
\label{fig:2paths}
\end{center}
\end{figure}

\subsection{Upper Bound on the Payment of \pathmechanism\ }

In this subsection, we will prove Lemma~\ref{lem:k-path-upper}.
Fix a cost vector $\vecc$ and set $G^*=G^*(\vecc)$,
$\calH=\calH(G^*(\vecc))$, $\alpha=\alpha(G^*(\vecc))$.
Observe that since $G^*$ is the cheapest collection of $k$ edge-disjoint
paths in $G^*$, it is necessarily cycle-free.
We say that a vertex $v$ is an {\em articulation point} for $G^*$ if
it lies on any $s$-$t$ path in $G$. Suppose that  $G^*$
has $a+1$ \textit{articulation points}
$s=v_1, v_2, \dots, v_{a+1}=t$ that subdivide $G^*$
into $a$ parts $G^*_1,\ldots,G^*_a$. Denote by $\calH_j$
the induced subgraph of $\calH$ with vertex set $\{v_e\mid e\in G^*_j\}$,
$j=1, \dots, a$. For any graph $\Gamma$, let $V(\Gamma)$ and $E(\Gamma)$ denote
the sets of vertices and edges of $\Gamma$, respectively.
The following intermediate lemma describes the structure
of the graph $\calH$.

\begin{figure}[ht]
\begin{center}
\include{flow-arxiv}
\caption{An example of the construction of $\mathcal{H}$ from $G^*$ for $k=2$}
\end{center}
\end{figure}

\begin{lemma}\label{cl3}
Graph $\mathcal{H}$ is connected if and only if $G^*$ has no articulation points.
Hence, for each $j$, $\mathcal{H}_j$ is connected.
\end{lemma}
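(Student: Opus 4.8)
The starting point is a combinatorial description of adjacency in $\calH$. Since $G^*$ is a union of $k+1$ edge-disjoint $s$-$t$ paths and is acyclic, $s$ has out-degree exactly $k+1$ in $G^*$; by max-flow--min-cut the maximum number of edge-disjoint $s$-$t$ paths in $G^*$ equals $k+1$ and every $s$-$t$ cut of $G^*$ has at least $k+1$ edges. Consequently, for $e,e'\in E(G^*)$ we have $v_e\sim_{\calH}v_{e'}$ iff $G^*-\{e,e'\}$ has fewer than $k$ edge-disjoint $s$-$t$ paths iff some $s$-$t$ cut of $G^*$ of size \emph{exactly} $k+1$ contains both $e$ and $e'$. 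Two immediate consequences: (i) the $k+1$ edges of any minimum $s$-$t$ cut form a clique in $\calH$; and (ii) since $G^*$ is a \emph{cheapest} union of $k+1$ edge-disjoint paths, it is edge-minimal (deleting any edge destroys one of the paths), so every edge of $G^*$ lies in some minimum cut.

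Next I would show that $\calH$ splits along the articulation points. Let $s=v_1,\dots,v_{a+1}=t$ be the articulation points, subdividing $G^*$ into blocks $G^*_1,\dots,G^*_a$. These blocks lie ``in series'': each edge belongs to exactly one $G^*_j$, and no edge joins non-consecutive blocks (such an edge would let an $s$-$t$ path bypass some $v_i$). I claim every minimum cut $C$ of $G^*$ is contained in a single block: $C$ separates $v_1$ from $v_{a+1}$, so by a telescoping argument it must separate $v_i$ from $v_{i+1}$ inside $G^*_i$ for some $i$, and such a separator already needs $k+1$ edges; since $|C|=k+1$, we get $C\subseteq E(G^*_i)$. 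Combined with the adjacency rule, $\calH$ has no edge between distinct blocks, i.e.\ $\calH=\calH_1\sqcup\cdots\sqcup\calH_a$, each $\calH_j$ nonempty (a block has at least $k+1\ge 2$ edges). In particular, if $G^*$ has an interior articulation point then $a\ge 2$ and $\calH$ is disconnected; moreover, the minimum cuts of $G^*$ lying in $G^*_j$ are exactly the minimum ($(k+1)$-element) cuts of $G^*_j$ viewed on its own, so $\calH_j$ coincides with the dependency graph that $G^*_j$ produces as a standalone $k$-path system.

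It remains to prove the converse, which is the technical heart: if $G^*$ — equivalently, a single block $B$ (a union of $k+1$ edge-disjoint $v$-$w$ paths, acyclic, edge-minimal, with no interior articulation point) — has no articulation point, then $\calH(B)$ is connected. By (i) and (ii), $\calH(B)$ is a union of cliques, one per minimum cut, covering all of $V(\calH(B))$, so it suffices to link any two minimum cuts by a chain of minimum cuts in which consecutive members share an edge. Here I would invoke the lattice structure: by submodularity of the cut function, the sets $U$ with $s\in U\not\ni t$ and $|\partial U|=k+1$ (where $\partial U$ is the set of edges leaving $U$) form a distributive lattice with bottom $\{v\}$ (cut $=$ the $k+1$ edges out of $v$) and top $V(B)\setminus\{w\}$ (cut $=$ the $k+1$ edges into $w$), so any two minimum cuts are joined through cover relations $U\lessdot U'$. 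The crux is: along every cover step $U\lessdot U'$, the cuts $\partial U$ and $\partial U'$ share an edge. If not, $\partial U$ consists exactly of the edges from $U$ into $W:=U'\setminus U$ and $\partial U'$ exactly of the edges from $W$ into $V(B)\setminus U'$; using edge-minimality (so that every ``backward'' edge across a minimum cut carries no flow in any maximum flow) one argues that $W$ is a self-contained subnetwork with $k+1$ edges in and $k+1$ out, and then that either $W$ is a single vertex — through which all $k+1$ units of flow pass, hence an interior articulation point, a contradiction — or some nonempty proper downward-closed $S\subsetneq W$ gives a minimum cut $\partial(U\cup S)$ strictly between $U$ and $U'$, contradicting the cover relation. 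Cleanly ruling out this intermediate ($|W|\ge 2$, no refinement) case via edge-minimality is the step I expect to be the main obstacle. Granting the claim, consecutive minimum cuts share an edge, so $\calH(B)$ is connected; applying this to each $G^*_j$ (which has no interior articulation point by construction) shows every $\calH_j$ is connected, and together with $\calH=\calH_1\sqcup\cdots\sqcup\calH_a$ we conclude $\calH$ is connected $\iff a=1\iff G^*$ has no articulation points, the ``Hence'' being immediate.
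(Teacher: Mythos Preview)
Your plan is sound and takes a genuinely different route from the paper. The paper works on the \emph{path} side rather than the cut side: it first proves that $v_e\not\sim_{\calH}v_{e'}$ iff some $s$--$t$ path in $G^*$ contains both $e$ and $e'$ (Claim~\ref{cl1}), and that for any fixed $s$--$t$ path the $\calH$-neighbours of an outside vertex form a contiguous interval along that path (Claim~\ref{cl2}). The hard direction is then argued by contraposition: take a connected component $C$ of $\calH$, endow $V(G^*)$ with the reachability order, show that the set of endpoints of the edges in $C$ has a unique minimal element $\rho(C)\neq s,t$, and use the interval claim to check that every $s$--$t$ path passes through $\rho(C)$. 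This is short and elementary, and the two auxiliary claims are reused later in the payment analysis; your cut-lattice approach, by contrast, yields a cleaner structural picture of $\calH$ as a union of $(k{+}1)$-cliques indexed by minimum cuts.

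The step you flag as the obstacle does close, and the missing observation is \emph{degree balance}: since $G^*$ is a union of $k+1$ edge-disjoint $s$--$t$ paths, every internal vertex has equal in- and out-degree. Under your hypothesis $\partial U\cap\partial U'=\emptyset$, the three parts $U$, $W$, $V\setminus U'$ have no ``backward'' edges between them (each of the $k+1$ paths visits $U$, then $W$, then $V\setminus U'$, crossing each cut exactly once). Now pick any source $w$ of the DAG induced on $W$; all its in-edges come from $U$, so degree balance gives
\[
|\partial(U\cup\{w\})|=(k+1)-d^-(w)+d^+(w)=k+1.
\]
If $|W|\ge 2$ this is a minimum cut strictly between $U$ and $U'$, contradicting the cover relation; hence $|W|=1$, and that single vertex is the desired articulation point.
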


For each vertex $v\in V(\mathcal{H})$, let $e_v$ be the corresponding edge in $G^*$.
Before we prove Lemma~\ref{cl3}, we need the following two observations.

\begin{claim}\label{cl1}
Let $u$ and $v$ be two vertices of $\mathcal{H}$, then $uv\notin E(\mathcal{H})$ if and only if
there is an $s$-$t$ path in $G^*$ going through both $e_u$ and $e_v$.
\end{claim}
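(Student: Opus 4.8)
The plan is to turn the statement into a purely combinatorial fact about $G^*$ and then prove that fact using the structure of $G^*$ as a union of $k+1$ edge-disjoint $s$-$t$ paths. First I would unwind the definitions. In the pruned $k$-paths system the ground set is $\calE^*=E(G^*)$ and $\calF^*$ consists of the subsets of $E(G^*)$ that contain $k$ edge-disjoint $s$-$t$ paths, so, by the construction of the dependency graph, $uv\in E(\mathcal{H})$ means exactly that every element of $\calF^*$ contains $e_u$ or $e_v$, i.e., that $E(G^*)\setminus\{e_u,e_v\}$ does \emph{not} contain $k$ edge-disjoint $s$-$t$ paths (equivalently, the maximum $s$-$t$ flow with unit capacities in $G^*-e_u-e_v$ is at most $k-1$). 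Thus it suffices to show: $G^*-e_u-e_v$ has $k$ edge-disjoint $s$-$t$ paths if and only if some $s$-$t$ path of $G^*$ uses both $e_u$ and $e_v$.

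Write $e_u=(x_1,y_1)$, $e_v=(x_2,y_2)$, and recall that $G^*$ is acyclic and is the union of $k+1$ edge-disjoint $s$-$t$ paths $P_1,\dots,P_{k+1}$; in particular every vertex and every edge of $G^*$ lies on some $P_i$, and, by Menger's theorem, every $s$-$t$ cut of $G^*$ has at least $k+1$ edges. A short acyclicity argument shows that $G^*$ has an $s$-$t$ path using $e_u$ and then $e_v$ iff $y_1$ reaches $x_2$ in $G^*$: since every vertex lies on an $s$-$t$ path, $s$ reaches $x_1$ and $y_2$ reaches $t$, and concatenating a simple $s$-$x_1$ path, $e_u$, a simple $y_1$-$x_2$ path, $e_v$, and a simple $y_2$-$t$ path gives a walk that is automatically a \emph{simple} $s$-$t$ path, because any repeated vertex would close a directed cycle. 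Hence ``some $s$-$t$ path uses both $e_u$ and $e_v$'' is equivalent to ``$y_1\rightsquigarrow x_2$ or $y_2\rightsquigarrow x_1$ in $G^*$'', and the remaining task is two edge-connectivity arguments.

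For the implication ``path $\Rightarrow uv\notin E(\mathcal{H})$'', take an $s$-$t$ path through $e_u$ and then $e_v$ (the case $e_v$ before $e_u$ being symmetric) and suppose, for contradiction, that $G^*-e_u-e_v$ has an $s$-$t$ cut of size at most $k-1$. Restoring $e_u,e_v$, this becomes an $s$-$t$ cut $(X,Y)$ of $G^*$ with at most $k+1$ crossing edges, hence exactly $k+1$, with both $e_u$ and $e_v$ directed from $X$ to $Y$ (so $x_1,x_2\in X$, $y_1,y_2\in Y$) and $(X,Y)$ a minimum cut. Then $P_1,\dots,P_{k+1}$ form a maximum flow saturating $(X,Y)$, so flow conservation forces every edge directed from $Y$ to $X$ to carry no flow, i.e., to lie on no $P_i$; but every edge of $G^*$ lies on some $P_i$, so $G^*$ has \emph{no} edge from $Y$ to $X$ --- contradicting $y_1\rightsquigarrow x_2$, since a path from $y_1\in Y$ to $x_2\in X$ must traverse such an edge. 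Hence $G^*-e_u-e_v$ has $k$ edge-disjoint $s$-$t$ paths, so some member of $\calF^*$ avoids both $e_u$ and $e_v$, i.e., $uv\notin E(\mathcal{H})$.

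For the converse I would prove the contrapositive: assuming no $s$-$t$ path of $G^*$ uses both $e_u$ and $e_v$, so $y_1\not\rightsquigarrow x_2$ and $y_2\not\rightsquigarrow x_1$ (in particular $y_1\ne x_2$ and $y_2\ne x_1$), let $Y$ be the downward closure of $\{y_1,y_2\}$ under reachability (all vertices reachable from $y_1$ or $y_2$, including $y_1,y_2$) and $X=V(G^*)\setminus Y$. Acyclicity together with the two non-reachability assumptions (and the fact that $G^*$ has no edge into $s$) give $s,x_1,x_2\in X$ and $t,y_1,y_2\in Y$, so $e_u$ and $e_v$ are both $X$-to-$Y$ edges; moreover, since $Y$ is downward-closed, each $P_i$ crosses from $X$ to $Y$ exactly once, and since every edge of $G^*$ lies on some $P_i$, the cut $(X,Y)$ has at most $k+1$ edges, hence exactly $k+1$. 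Removing $e_u$ and $e_v$ leaves $(X,Y)$ as a cut of size $k-1$ in $G^*-e_u-e_v$, so it has fewer than $k$ edge-disjoint $s$-$t$ paths, i.e., $uv\in E(\mathcal{H})$. I expect this converse to be the main obstacle: one must guess the right cut --- the downward closure of $\{y_1,y_2\}$ --- and then argue that it is a \emph{minimum} cut, which rests on the slightly delicate interplay between ``$G^*$ is exactly the union of the $k+1$ paths, so every edge is covered'' and acyclicity (so that a downward-closed set is crossed exactly once by each path); the reachability reformulation of ``path through both'' and the observation that a saturated minimum cut of $G^*$ has no backward edges are the other two ingredients and are comparatively routine.
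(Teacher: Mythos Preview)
Your proof is correct, but it takes a genuinely different and considerably longer route than the paper's.

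The paper dispatches the claim in two lines by exploiting the \emph{degree balance} of $G^*$: since $G^*$ is exactly the edge-disjoint union of $k+1$ simple $s$-$t$ paths, every vertex other than $s,t$ has equal in- and out-degree. Thus if $P$ is any $s$-$t$ path in $G^*$ through $e_u$ and $e_v$, the subgraph $G^*\setminus E(P)$ is still balanced with net out-degree $k$ at $s$, so (by flow decomposition) it contains $k$ edge-disjoint $s$-$t$ paths, all avoiding $e_u,e_v$; hence $uv\notin E(\calH)$. Conversely, if $G^*\setminus\{e_u,e_v\}$ contains $k$ edge-disjoint $s$-$t$ paths $Q_1,\dots,Q_k$, then $G^*\setminus\bigcup_i E(Q_i)$ is balanced with net out-degree $1$ at $s$ and contains $e_u,e_v$; since $G^*$ is acyclic, this remainder is a single $s$-$t$ path through both edges.

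Your argument instead works on the \emph{cut side}: you reformulate ``path through both'' as a reachability condition between the edge-heads and tails, and then for each direction you locate a minimum $(k+1)$-cut of $G^*$ containing both $e_u$ and $e_v$ (by contradiction in one direction, and by taking the downward closure of $\{y_1,y_2\}$ in the other). This is sound and self-contained, and the explicit cut you build in the contrapositive is a nice structural observation in its own right. What the paper's approach buys is brevity and a direct use of the Eulerian structure of $G^*$; what yours buys is a more explicit picture of why the two edges sit in a common minimum cut exactly when no path threads them both, which is arguably closer in spirit to how $\calH$ is later used.
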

\begin{proof}
If there is a path $P\subseteq G^*$ such that $e_u,e_v\in P$, then in $G^*\setminus\{e_u,e_v\}$
there are $k$ edge-disjoint $s$-$t$ paths. Hence there is no edge between $u$ and $v$. Conversely,
if $uv\notin E(\mathcal{H})$, then $G^*\setminus\{e_u,e_v\}$ has $k$ edge-disjoint $s$-$t$
paths. Removing these $k$ paths from $G^*$ leads to an $s$-$t$ path going through $e_u$ and $e_v$.
\end{proof}

\begin{claim}\label{cl2}
Let $P$ be an $s$-$t$ path in $G^*$, corresponding to the vertex sequence $v_1,\ldots,v_\ell$ of
$\mathcal{H}$, and let $v$ be another vertex of $\mathcal{H}$. Then the following holds:
\begin{enumerate}
\item There are integral $i$ and $j$, $1\leq i\leq j\leq \ell$, such that $vv_r\in E(\mathcal{H})$
for $i\leq r\leq j$, and $vv_r\notin E(\mathcal{H})$ for $1\le r < i$ and $j < r \le \ell$.

\item Let $\mathcal{L}(v)=\{v_r \ | \ 1\leq r\leq \ell, vv_r\notin E(\mathcal{H})\}$, then in $G^*$
there is a $s$-$t$ path containing all vertices of $\mathcal{L}(v)$ and $v$.
\end{enumerate}
\end{claim}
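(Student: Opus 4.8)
The plan is to recast adjacency in $\mathcal{H}$ as a reachability condition inside $G^*$ and then exploit that $G^*$ is acyclic. Recall that $G^*$ is cycle-free and is a union of $k+1$ edge-disjoint $s$-$t$ paths, so every edge of $G^*$ lies on some $s$-$t$ path; hence $s$ has in-degree $0$, $t$ has out-degree $0$, every vertex of $G^*$ is reachable from $s$ and reaches $t$, and the edge $e_v=(c,d)$ satisfies $c\neq t$, $d\neq s$. The first thing I would record, and the observation that carries most of the argument, is that \emph{every $s$-$t$ walk in $G^*$ is automatically a simple $s$-$t$ path}: a repeated vertex on a walk produces a closed walk, hence a directed cycle, which $G^*$ does not have. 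Together with Claim~\ref{cl1}, this lets me assemble $s$-$t$ paths by concatenating directed subpaths without ever having to check vertex-disjointness by hand.

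Write $P$ as $s=a_0\to a_1\to\cdots\to a_\ell=t$, so $e_{v_r}=(a_{r-1},a_r)$, and write $x\rightsquigarrow y$ for ``$G^*$ contains a directed path from $x$ to $y$'' (allowing $x=y$). The key step is the equivalence
\[
vv_r\notin E(\mathcal{H})\ \Longleftrightarrow\ (a_r\rightsquigarrow c)\ \text{ or }\ (d\rightsquigarrow a_{r-1}),\qquad 1\le r\le\ell .
\]
For ``$\Rightarrow$'' I use Claim~\ref{cl1} to get an $s$-$t$ path through $e_{v_r}$ and $e_v$ and read off, from the order in which the two edges appear on it, either $a_r\rightsquigarrow c$ or $d\rightsquigarrow a_{r-1}$. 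For ``$\Leftarrow$'', if $a_r\rightsquigarrow c$ then the pieces $P[s..a_r]$, a directed path $a_r\rightsquigarrow c$, the edge $e_v$, and a directed path $d\rightsquigarrow t$ splice into an $s$-$t$ walk using both $e_{v_r}$ and $e_v$, which by the observation above is a simple $s$-$t$ path, so $vv_r\notin E(\mathcal{H})$ by Claim~\ref{cl1}; the case $d\rightsquigarrow a_{r-1}$ is handled symmetrically, using $s\rightsquigarrow c$, then $e_v$, then a path $d\rightsquigarrow a_{r-1}$, then $e_{v_r}$, then $P[a_r..t]$.

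Part~1 then drops out by monotonicity along $P$. Because $a_{r'}\rightsquigarrow a_r$ whenever $r'\le r$, the set $\{r:a_r\rightsquigarrow c\}$ is downward closed, so it is an initial segment $\{1,\dots,m_1\}$, and since $t$ has out-degree $0$ and $c\neq t$ we have $t\not\rightsquigarrow c$, whence $m_1\le\ell-1$; symmetrically $\{r:d\rightsquigarrow a_{r-1}\}$ is a final segment $\{m_2,\dots,\ell\}$ with $m_2\ge2$. If these two segments covered $\{1,\dots,\ell\}$ then some index $r$ would satisfy both $d\rightsquigarrow a_{r-1}$ and $a_r\rightsquigarrow c$, giving $d\rightsquigarrow a_{r-1}\to a_r\rightsquigarrow c$, which with the edge $(c,d)$ is a directed cycle --- impossible. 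Hence $m_2\ge m_1+2$; the indices $r$ with $vv_r\notin E(\mathcal{H})$ are exactly those in $\{1,\dots,m_1\}\cup\{m_2,\dots,\ell\}$, and $i=m_1+1$, $j=m_2-1$ give the advertised nonempty interval of neighbours.

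Finally, for part~2 I would just produce the path explicitly: concatenate $P[s..a_{m_1}]$, a directed path $a_{m_1}\rightsquigarrow c$, the edge $e_v$, a directed path $d\rightsquigarrow a_{m_2-1}$, and $P[a_{m_2-1}..t]$ --- replacing the first two pieces by a path $s\rightsquigarrow c$ if $m_1=0$, and the last two by a path $d\rightsquigarrow t$ if $m_2=\ell+1$. This is an $s$-$t$ walk, hence a simple $s$-$t$ path in $G^*$, and the edges it uses are exactly $e_v$ together with $e_{v_1},\dots,e_{v_{m_1}}$ and $e_{v_{m_2}},\dots,e_{v_\ell}$, i.e.\ precisely the edges corresponding to $v$ and to the vertices of $\mathcal{L}(v)$. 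The step I expect to be the main obstacle is the ``$\Leftarrow$'' direction of the equivalence together with this last construction: both would be delicate (one would have to massage walks into simple paths) were it not for the acyclicity observation, which makes them immediate. After that, what remains is the routine monotonicity bookkeeping and the two degenerate cases $m_1=0$, $m_2=\ell+1$.
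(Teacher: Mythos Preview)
Your argument is correct and reaches the same conclusion as the paper's proof, but via a somewhat different and more explicit route. The paper establishes part~1 in two separate steps: first a min-cut argument (any size-$(k{+}1)$ cut in $G^*$ through $e_v$ must also meet $P$, so at least one $v_i$ is a neighbour of $v$), and then a direct convexity argument (if $v_p,v_q$ are neighbours with $p<r<q$ but $v_r$ is not, a path through $e_v$ and $e_{v_r}$ can be spliced with $P$ to produce one through $e_v$ and either $e_{v_p}$ or $e_{v_q}$, contradicting Claim~\ref{cl1}). For part~2 the paper is quite terse, merely pointing to Claim~\ref{cl1} applied to the pairs $(v_{i-1},v)$ and $(v_{j+1},v)$. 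You instead set up a single reachability equivalence for non-adjacency and derive both the interval structure and the explicit witnessing path from it; your cycle argument replaces the paper's cut argument for the existence of a neighbour, and your construction for part~2 spells out what the paper leaves implicit. Both approaches rest on the same two ingredients---Claim~\ref{cl1} and acyclicity of $G^*$---so neither is more general, but yours is more uniform and self-contained.

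One small slip to fix: when you argue that the two segments cannot cover $\{1,\dots,\ell\}$, you assert that coverage would force a single index $r$ satisfying both $d\rightsquigarrow a_{r-1}$ and $a_r\rightsquigarrow c$. That is only literally true when $m_2\le m_1$; in the boundary case $m_2=m_1+1$ the segments are disjoint yet still cover. The contradiction survives, however: from $d\rightsquigarrow a_{m_2-1}=a_{m_1}$ and $a_{m_1}\rightsquigarrow c$ (here $m_1\ge 1$ since $m_2\ge 2$) you still get $d\rightsquigarrow c\to d$, a cycle. Also, the word ``exactly'' in your description of the edges on the constructed path in part~2 is an overstatement---the connecting subpaths $a_{m_1}\rightsquigarrow c$ and $d\rightsquigarrow a_{m_2-1}$ may traverse additional edges of $G^*$---but this is harmless, since the claim only requires the path to \emph{contain} $e_v$ and the edges corresponding to $\mathcal{L}(v)$.
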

\begin{proof}~
\begin{enumerate}
\item Since $G^*\setminus \{e_v\}$ contains only $k$ edge-disjoint $s$-$t$ paths, there is a cut of
size $k+1$ in $G^*$ containing $e_v$ and this cut must contain an edge of $P$. Thus, there exists a
vertex $v_i$ with $1\leq i\leq \ell$ adjacent to $v$ in $\mathcal{H}$. For the first part of the
claim, it remains to show that if $v$ is adjacent to $v_p$ and $v_q$ with $p<q$, then so is
$v_r$
for any $r$ with $p\leq r\leq q$. Suppose otherwise, then by Claim~\ref{cl1} there should be a path
$P'$ going through $e_v$ and $e_{v_r}$ for some $p< r < q$. If $P'$ goes through $e_v$ earlier than
$e_{v_r}$, we can construct a path in $G^*$ going through both $e_v$ and $e_{v_q}$.
Indeed, we can concatenate the prefix of $P'$ that ends with $e_{v_p}$ and the suffix
of $P$ that starts with $e_{v_p}$; this path is simple since $G^*$ is cycle-free.
By a similar argument, if $P'$ goes
through $e_{v_r}$ earlier than $e_v$, then we can construct a path in $G^*$ gong through $e_{v_p}$
and $e_v$. For both cases we arrive at a contradiction with Claim~\ref{cl1}, since $v_p$ and $v_q$
are adjacent to $v$ in $\mathcal{H}$.

\item This part can be derived from the first part and Claim~\ref{cl1} applied to two pairs of
vertices $v_{i-1},v$ and $v_{j+1},v$.
\end{enumerate}
\end{proof}

We are now ready to prove Lemma~\ref{cl3}.

\begin{proof}[Proof of Lemma~\ref{cl3}]
Suppose that $G^*$ contains an articulation point $v\in V(G^*)$. Then we can split $G^*$ into the
union of two subgraphs $G_1$ (from $s$ to $v$) and $G_2$ (from $v$ to $t$). By the definition of
$\mathcal{H}$, it can be seen that for any $e_1\in E(G_1)$ and $e_2\in E(G_2)$,
$v_{e_1}v_{e_2}\notin E(\mathcal{H})$. Hence, if $\mathcal{H}$ is connected, then $G^*$ has no
articulation points.

On the other hand, suppose that $G^*$ has no articulation points. We can endow the set of nodes of
$G^*$ with a partial order by putting $u > v$ if there is a path in $G^*$ from $u$ to $v$. Suppose
that $\mathcal{H}$ is disconnected. Let $C$ be a connected component of $\mathcal{H}$. Let $S(C)$ be
the set of endpoints of the edges in $G^*$ corresponding to those vertices in $C$. We claim that
$S(C)$ has a unique smallest element, denoted by $\rho(C)$. Suppose otherwise that $S(C)$ has at
least two different smallest elements $v_1$ and $v_2$ with $e_1=u_{1}v_{1}\in E(G^*)$ and
$e_2=u_{2}v_{2}\in E(G^*)$. Assume without loss of generality that
$v_1\neq t$. Then there is $v_3\in V(G^*)$ such that $e_3=v_{1}v_3\in E(G^*)$. There could not be a
path going through both $e_2$ and $e_3$, since
in this case either $v_2>v_1$ or $v_1>v_2$. Hence $e_2$ and $e_3$ are adjacent in $\mathcal{H}$,
which implies that $e_3\in C$. But then we have $v_3\in S(C)$ and $v_3<v_1$, a contradiction to the
assumption that $v_1$ is a smallest element.

Since $e_1=uv\in E(G^*)$ and $e_2=wv\in E(G^*)$ are necessarily adjacent in $\mathcal{H}$, two
different connected components $C_1$ and $C_2$ of $\mathcal{H}$ cannot have the same smallest
element, i.e. $\rho(C_1)\neq \rho(C_2)$. Let us take the component $C$ with the smallest element
$\rho(C)\neq t$. It is also clear by the definition of $\rho(C)$ that
$\rho(C)\neq s$. Let $e$ be a vertex of $C$ with the ending point at $\rho(C)$ in $G^*$. We want to
show that $\rho(C)$ is an articulation point for $G^*$. Suppose otherwise that there is a path $P$
in
$G^*$ which does not go through $\rho(C)$. Let $v_e\in\mathcal{H}$ correspond to $e$ and
$v_1,\ldots,v_\ell\in\mathcal{H}$ correspond to $P$. Applying Claim~\ref{cl2} to $P$ and $v_e$, we
get that $v_j$ is adjacent to $v_e$ in
$\mathcal{H}$ and there is a path from endpoint of $v_e$ in $G^*$, i.e. $\rho(C)$, to the endpoint
of $v_j$. Thus we
arrive at a contradiction, since $v_j\in C$ and endpoint of $v_j$ is smaller than $\rho(C)$.
\end{proof}

We will now prove Lemma~\ref{lem:k-path-upper}.

\begin{proof}[Proof of Lemma~\ref{lem:k-path-upper}]
Suppose first that $G^*$ has no articulation points.
Let $P^*$ be a losing path in $G^*$ according to the mechanism.
Then $P^*$ is the most expensive path
in $G^*$ w.r.t. scaled cost vector.
Assume that $P^*$ is represented as a sequence of vertices
$v_1,v_2,\ldots,v_\ell$ in $\mathcal{H}$ and let $v$ be another vertex of $\mathcal{H}$. Let $i(v)$
and $j(v)$ be the two integers, $1\leq i(v)\leq j(v)\leq \ell$, as defined in Claim~\ref{cl2}. Note
that the threshold bid of $v$ is at most $\sum_{r =
i(v)}^{j(v)}\frac{c(v_r)}{w(v_r)}w(v)$, since otherwise by Claim~\ref{cl2}, one can find a more
expensive path (w.r.t. scaled costs) going through $\mathcal{L}(v)$ and $v$ than $P^*$.
Thus we have the following upper bound on the total payment of the mechanism:
\begin{eqnarray*}
\sum\limits_{v\notin P^*}\sum\limits_{r=i(v)}^{j(v)}\frac{c(v_r)}{w(v_r)}w(v) &=&
\sum\limits_{r=1}^{\ell}\sum\limits_{v\in N_{\mathcal{H}}(v_r)}\frac{c(v_r)}{w(v_r)}w(v) \\ &=&
\sum\limits_{r=1}^{\ell}\frac{c(v_r)}{w(v_r)}\sum\limits_{v\in N_{\mathcal{H}}(v_r)}w(v) \\ &=&
\sum\limits_{r=1}^{\ell}\frac{c(v_r)}{w(v_r)}\alpha(G^*) w(v_r) \\ &=&
\alpha(G^*)\sum\limits_{t=1}^{\ell}c(v_t).
\end{eqnarray*}
As $L(G, \vecc)\ge \sum\limits_{t=1}^{\ell}c(v_t)$, this proves the lemma for this case.

For the general case where there are articulation points in $G^*$, by using the same analysis on
each connected component $\mathcal{H}_i$ of $\mathcal{H}$,
we can upper-bound the total payment of our mechanism
by $\sum\limits_{\mathcal{H}_i}\alpha_i\sum\limits_{r=1}^{\ell_i}c(v_{i,r})$,
where $v_{i,1},v_{i,2},\ldots,v_{i,\ell_i}$ are a sequence of vertices in each $\mathcal{H}_i$
corresponding to the most expensive path $P^*$, and $\alpha_i$ is the maximum
eigenvalue of the adjacency matrix of ${\mathcal{H}_i}$.
As $\alpha(G^*)=\max_i \alpha_i$, and the most expensive path in $G^*$ is a concatenation
of the most expensive paths in $G_1^*, \dots, G_a^*$, the lemma is proven.
\end{proof}

\subsection{Proof of Theorem~\ref{thm:alpha-nu}}

We will now show how Lemmas~\ref{lemma-nash-lower-bound} and~\ref{lem:k-path-upper}
imply Theorem~\ref{thm:alpha-nu}.

\begin{proof}[Proof of Theorem~\ref{thm:alpha-nu}]
Fix an arbitrary cost vector $\vecc$.
Suppose that in the pruning stage we pick
a graph $G^*$.
By Lemma~\ref{lem:k-path-upper}, the total payment
of our mechanism is at most $\alpha(G^*)L(G, \vecc)$.
Since $G^*\in\calG^{k+1}$, we have $\alpha(G^*)\le \alpha_{k+1}$.
Consider a collection $P_1, \dots, P_{k+1}$
of $k+1$ edge-disjoint paths in $G$ such that
$\delta_{k+1}(P_1, \dots, P_{k+1}, \vecc) = \delta_{k+1}(G, \vecc)$.
Since $G^*$ is the cheapest collection of $k+1$ edge-disjoint
paths in $G$, we have $\sum_{e\in G^*}c(e)\le\sum_{i=1}^{k+1}\sum_{e\in P_i}c(e)$.
We obtain
$$
L(G, \vecc)\le \sum_{e\in G^*}c(e)\le \sum_{i=1}^{k+1}\sum_{e\in P_i}c(e)
\le(k+1)\delta_{k+1}(P_1, \dots, P_{k+1}, \vecc)=
(k+1)\delta_{k+1}(G, \vecc),
$$
where the last inequality follows from the definition of
$\delta_{k+1}(P_1, \dots, P_{k+1}, \vecc)$.
Thus, the frugality ratio of \pathmechanism\ on $\vecc$
is at most
$$
\frac{\alpha(G^*)L(G, \vecc)}{k\delta_{k+1}(G, \vecc)}\le
\frac{\alpha_{k+1}(k+1)\delta_{k+1}(G, \vecc)}{k\delta_{k+1}(G, \vecc)} =
\frac{\alpha_{k+1}(k+1)}{k}.
$$
\end{proof}


\section{$\mu$-Frugality Analysis: Proof of Theorem~\ref{thm:alpha-mu}}\label{appendix-mu-frugal}
To prove Theorem~\ref{thm:alpha-mu}, we need to show that we can
lower-bound $\mu(\vecc)$ in terms of $L(G, \vecc)$.

Consider an arbitrary network $\calF$ with source $s$, sink $t$,
integer edge capacities and costs per unit flow that are
given by a vector $\vecc$.
Let $M$ be the size of the maximum flow in $\calF$.
For any (real) $x\in[0, M]$, let $C(x)$ be the cost of a cheapest
flow of size $x$ in $\calF$ (i.e. the sum of costs on all edges,
where the cost on an edge $e$ is the amount of flow times $\vecc(e)$). The following lemma
establishes several properties of the function
$C(x)$ that will be used in our proof.

\begin{lemma}\label{lem:c(x)}~
\begin{enumerate}
\item $C(x)$ is a convex function on $[0, M]$.
\item For any integer $i\le M-1$, $C(x)$ is a linear function on the interval $[i,i+1]$.
\end{enumerate}
\end{lemma}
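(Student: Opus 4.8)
I would handle the two parts separately. For part~1 the plan is to argue directly from convexity of the set of feasible flows of a fixed size. Fix $x_0,x_1\in[0,M]$ and let $f_0,f_1$ be cheapest flows of sizes $x_0$ and $x_1$, so $\mathrm{cost}(f_j)=C(x_j)$. For $\lambda\in[0,1]$ the flow $f_\lambda:=(1-\lambda)f_0+\lambda f_1$ satisfies flow conservation (a linear constraint) and $0\le f_\lambda(e)\le\mathrm{cap}(e)$ for every edge $e$ (being a convex combination of two flows that do), and it has size $(1-\lambda)x_0+\lambda x_1$. Since cost is a linear functional of the flow, $\mathrm{cost}(f_\lambda)=(1-\lambda)C(x_0)+\lambda C(x_1)$, hence $C((1-\lambda)x_0+\lambda x_1)\le(1-\lambda)C(x_0)+\lambda C(x_1)$, which is exactly convexity.

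For part~2, fix an integer $i\le M-1$. Since $\calF$ has integral capacities, there is an \emph{integral} cheapest flow $f_i$ of size $i$ (see, e.g., \cite{MinCostMaxFlow}); let $G_{f_i}$ be its residual network, with costs negated on backward arcs. As $i<M$, $G_{f_i}$ contains an $s$-$t$ path, and as $f_i$ has minimum cost, $G_{f_i}$ has no negative-cost directed cycle, so a cheapest $s$-$t$ path $P$ in $G_{f_i}$ is well defined; moreover each arc of $G_{f_i}$ has integral residual capacity $\ge 1$, so $P$ can be augmented by any amount up to $1$. I will then show $C(i+t)=C(i)+t\cdot\mathrm{cost}(P)$ for all $t\in[0,1]$, which gives linearity on $[i,i+1]$ with slope $\mathrm{cost}(P)=C(i+1)-C(i)$.

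The upper bound $C(i+t)\le C(i)+t\cdot\mathrm{cost}(P)$ follows by pushing $t$ units of flow along $P$, producing a feasible flow of size $i+t$ and cost $C(i)+t\cdot\mathrm{cost}(P)$. For the matching lower bound, take any flow $g$ of size $i+t$ and view $g-f_i$ as a flow of value $t$ in $G_{f_i}$ (using the forward residual arc of $e$ where $g(e)>f_i(e)$ and the backward arc where $g(e)<f_i(e)$); its cost equals $\mathrm{cost}(g)-C(i)$. Decomposing $g-f_i$ into $s$-$t$ paths carrying total flow $t$ together with circulations, the circulations contribute nonnegative cost (no negative residual cycle) and each $s$-$t$ path costs at least $\mathrm{cost}(P)$; hence $\mathrm{cost}(g)-C(i)\ge t\cdot\mathrm{cost}(P)$. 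Minimizing over $g$ yields $C(i+t)\ge C(i)+t\cdot\mathrm{cost}(P)$, and combining the two bounds finishes part~2.

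The genuinely routine steps are the flow-decomposition bookkeeping and checking that augmenting a fractional amount along $P$ preserves feasibility. The one place that uses network-flow theory rather than just linear algebra is the lower bound: it rests on the optimality characterization of $f_i$ (absence of negative-cost cycles in $G_{f_i}$) and on the existence of an integral cheapest flow at an integer size — both standard facts about min-cost flows that I would simply cite. I expect no serious obstacle; the main point to get right is to keep the residual-network signs consistent so that $\mathrm{cost}(g-f_i)$ is exactly $\mathrm{cost}(g)-C(i)$.
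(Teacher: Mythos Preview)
Your proof is correct and follows essentially the same approach as the paper: Part~1 is identical (convex combination of optimal flows), and Part~2 in both cases passes to the residual network of an integral cheapest $i$-flow and uses a shortest-path/flow-decomposition argument. The only cosmetic difference is that the paper obtains the upper bound $C(i+t)\le C(i)+t\cdot(C(i+1)-C(i))$ from the already-proved convexity rather than by explicit augmentation, and it phrases the lower bound as a path decomposition of the size-$t$ flow in the residual network without separately discussing cycles; your version with the no-negative-cycle observation is slightly more explicit but amounts to the same thing.
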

\begin{proof}~
\begin{itemize}
\item \textbf{Convexity.} It suffices to show that for any $0\leq\alpha\leq 1$ we have
      $$
       \alpha C(x)+(1-\alpha)C(y)\geq C(\alpha x+ (1-\alpha) y).
      $$
      Let $f_x$ and $f_y$ be cheapest flows of size $x$ and $y$, respectively.
      Clearly, their respective costs are $C(x)$ and $C(y)$. Then
      the flow $f=\alpha f_x+(1-\alpha)f_y$ is
      a flow of size $(\alpha x+(1-\alpha)y)$ and cost $\alpha C(x)+(1-\alpha) C(y)$
      that satisfies all capacity constraints.
      Thus,
      $$
       \alpha C(x)+(1-\alpha)C(y)= \vecc(f)\geq C(\alpha x+ (1-\alpha)y).
      $$

\item \textbf{Linearity on intervals.}
      We first show that $C(x)$ is linear on the
      interval $[0,1]$. Let us fix $x_0\in[0,1]$ and let $f$ be a cheapest
      flow of size $x_0$.
      We can represent $f$ as a finite sum of positive
      flows along $s$-$t$ paths $p_1, \dots, p_l$, i.e.,
      $$
        f =\sum_{i=1}^{l}\varepsilon_i p_i.
      $$

      We know that $C(1)$ is the cost of cheapest path $p$. Thus we have
      $$
       C(x_0)=\vecc(f)\geq \vecc(p)\sum_{i=1}^{l}\varepsilon_i = C(1)x_0.
      $$
      On the other hand, since $C(x)$ is convex, we have
      $x_0 C(1)+(1-x_0)C(0)\geq C(x_0)$. Hence $C(x_0) = x_0 C(1)$.

      In general, for the interval $[i, i+1]$ we first take a
      cheapest $i$-flow $f_i$ (which we can choose to be integer)
      and consider the residual network $\calF_i=\calF -f_i$.
      We can then apply the argument for the $[0,1]$ case to $\calF_i$.
\end{itemize}
\end{proof}

Our next lemma bounds $\mu(\vecc)$ in terms of the difference
between the cost of the cheapest flow of size $k$ and that of the cheapest
flow of size $k+1$.

\begin{lemma}\label{lem:(k+1)-k}
Let $(\calE, \calF)$ be a $k$-path system given by a directed
graph $G=(V, E)$, source $s$ and sink $t$,
and let $\vecc$ be its cost vector.
Then for the function $C(x)$ defined above
we have $k\cdot (C(k+1)-C(k))\leq \mu(\vecc)$.
\end{lemma}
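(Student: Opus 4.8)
The plan is to exhibit one bid vector $\vecy$ that is feasible for the program defining $\mu(\vecc)$ and whose objective $\sum_{e\in S}y(e)$ equals $k\bigl(C(k+1)-C(k)\bigr)$. Since $\mu(\vecc)$ is the \emph{maximum} of that program and, as observed above, constraint~(3) may be omitted, producing such a $\vecy$ (satisfying~(1) and~(2)) immediately gives $\mu(\vecc)\ge k\bigl(C(k+1)-C(k)\bigr)$.

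First I would normalize the setup: take $S$ to be the edge set of a minimum-cost integral $s$-$t$ flow of value $k$ in $G$ (unit capacities), so that $S$ is a disjoint union of $k$ edge-disjoint $s$-$t$ paths with $c(S)=C(k)$; this is a legitimate choice of minimum-cost feasible set. Write $d:=C(k+1)-C(k)$. By the standard min-cost-flow theory (e.g.~\cite{MinCostMaxFlow}), the residual network $G_S$ has no negative-cost cycle and $d$ is precisely the cost of a cheapest augmenting $s$-$t$ path in $G_S$.

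The heart of the argument is to construct a node potential $\pi\colon V\to\mathbb{R}$ with $\pi(s)=0$ and $\pi(t)=d$ satisfying $\pi(v)-\pi(u)\ge c(e)$ for every $e=(u,v)\in S$ and $\pi(v)-\pi(u)\le c(e)$ for every $e=(u,v)\in\calE\setminus S$. To obtain it I would add to $G_S$ an auxiliary arc from $t$ to $s$ of cost $-d$: any cycle using this arc closes up a residual $s$-$t$ walk of cost at least $d$, so the augmented network still has no negative cycle and hence admits feasible potentials; the new arc forces $\pi(t)-\pi(s)\ge d$, while telescoping along a cheapest augmenting path forces $\pi(t)-\pi(s)\le d$, giving $\pi(t)=d$ after setting $\pi(s)=0$. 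The two displayed inequalities are then just the feasibility conditions read off the residual arcs (a used edge $e=(u,v)\in S$ contributes the backward arc $(v,u)$ of cost $-c(e)$; an unused edge the forward arc $(u,v)$ of cost $c(e)$). Now define $y(e)=\pi(v)-\pi(u)$ for $e=(u,v)\in S$ and $y(e)=c(e)$ for $e\notin S$. Constraint~(1) is immediate, and since $S$ splits into $k$ edge-disjoint $s$-$t$ paths the objective telescopes to $\sum_{e\in S}y(e)=k(\pi(t)-\pi(s))=kd$.

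What remains, and what I expect to be the only non-routine step, is constraint~(2): $\sum_{e\in S\setminus T}y(e)\le\sum_{e\in T\setminus S}c(e)$ for every feasible $T$. I would fix inside $T$ a sub-collection $R\subseteq T$ that is a union of $k$ edge-disjoint $s$-$t$ paths (it exists because $T$ is feasible); since $R\subseteq T$, $y\ge 0$ on $S$ and $c\ge 0$, it suffices to show $\sum_{e\in S\setminus R}y(e)\le\sum_{e\in R\setminus S}c(e)$. Telescoping the potential differences over $S$ and over $R$ both yield $kd$, so after cancelling the common edges $S\cap R$ we get $\sum_{e\in S\setminus R}\bigl(\pi(v_e)-\pi(u_e)\bigr)=\sum_{e\in R\setminus S}\bigl(\pi(v_e)-\pi(u_e)\bigr)$; the left side equals $\sum_{e\in S\setminus R}y(e)$ by the definition of $y$ on $S$, and each term on the right is $\le c(e)$ because those edges lie outside $S$. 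Chaining these gives~(2). Hence $\vecy$ is feasible and $\mu(\vecc)\ge kd=k\bigl(C(k+1)-C(k)\bigr)$, as claimed.
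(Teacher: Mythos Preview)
Your proof is correct and takes a genuinely different route from the paper's.

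The paper constructs its feasible bid vector $\vecy$ by an iterative \emph{raising} procedure: starting from $\vecy=\vecc$, it increases the cost of each edge of the cheapest $k$-flow $f_k$ as far as possible while keeping $f_k$ optimal and keeping $C_\vecy(k+1)-C_\vecy(k)$ unchanged. It then averages the ``tight'' $k$- and $(k+1)$-flows that witness the stopping conditions, observes that the resulting fractional flow saturates no edge, and uses this together with convexity of $C_\vecy$ to force $C_\vecy$ to be \emph{linear} on $[0,k+1]$; the bound $\mu(\vecc)\ge C_\vecy(k)=k(C_\vecy(k+1)-C_\vecy(k))=k(C(k+1)-C(k))$ then drops out. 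Your argument, by contrast, is pure LP duality: you take optimal node potentials $\pi$ for the min-cost $k$-flow, pin $\pi(t)-\pi(s)$ to the marginal cost $d=C(k+1)-C(k)$ by adding the auxiliary arc $(t,s)$ of cost $-d$, and set $y(e)=\pi(\text{head})-\pi(\text{tail})$ on $S$. Constraint~(2) then follows in one line because both $S$ and any competing $k$-flow $R$ telescope to $k\,d$ under $\pi$, and the reduced-cost inequalities handle the edges of $R\setminus S$.

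What each approach buys: your potential-based proof is shorter, more transparent, and makes the role of min-cost-flow duality explicit; it also yields the feasible $\vecy$ in closed form. The paper's approach is more self-contained (it uses only elementary convexity of $C(x)$ and avoids residual networks and potentials), and its intermediate conclusion that $C_\vecy$ becomes linear on $[0,k+1]$ is of independent interest. Both arguments implicitly take $S$ to be the edge set of a min-cost integral $k$-flow, so they are making the same normalizing choice of the minimum-cost feasible set.
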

\begin{proof}
For any cost vector $\vecy\in{\mathbb R}^{|E|}$,
let $C_\vecy(x)$ denote the cost of the cheapest flow
of size $x$ in $G$ with respect to the cost vector $\vecy$;
we have $C_\vecc(x)=C(x)$.

Let $f_k$ be a cheapest flow of size $k$, and let
$f_{k+1}$ be a cheapest flow of size $k+1$, both with respect to cost vector $\vecc$.
Let $n_k$ denote the number of edges in $f_k$.
Assume without loss of generality that the edges
in $f_k$ are labeled as $e_1, \dots, e_{n_k}$.

We will now gradually increase the costs of edges in $f_k$
so that the resulting cost vector $\vecy$ satisfies
certain conditions.
Specifically, we start with $\vecy=\vecc$.
Then, at $i$-th step, $i=1, \dots, n_k$,
we increase $\vecy(e_i)$ as much as possible
subject to the following constraints:

\begin{itemize}
\item[(a)] $\vecy(f_k)=\sum_{e\in f_k}\vecy(e)=C_\vecy(k)$, i.e., $f_k$ must remain the cheapest
      $k$-flow w.r.t. cost vector $\vecy$.
\item[(b)] $C_\vecy(k+1)-C_\vecy(k)=C(k+1)-C(k)$, i.e. $C_\vecy(k+1)-C_\vecy(k)$ does not
      change.
\end{itemize}

Since our $k$-path system is monopoly-free, in the end, all entries of $\vecy$
are finite. Further, it is not hard to see that when the process is over, we cannot
increase the cost of any edge in $f_k$ without violating~(a) or~(b).

Now, for each edge $e\in f_k$, we will define the {\em tight flow} $f(e)$ as below
to be a flow that prevented us from raising $\vecy(e)$ beyond its current value.
Specifically, consider each edge $e_i\in f_k$.
Suppose first that when we were raising $\vecy(e_i)$,
we had to stop because constraint~(a) became tight. In this case, let $f(e_i)$
be some cheapest flow of size $k$ in $G\setminus\{e_i\}$ with respect to the costs $\vecy$
at the end of stage $i$. Now, suppose that when we were raising $\vecy(e_i)$,
constraint~$(b)$ became tight first. In this case, let $f(e_i)$ be some cheapest
flow of size $k+1$ in $G\setminus\{e_i\}$ with respect to the costs $\vecy$
at the end of stage $i$. Observe that $f_k$ remains a cheapest $k$-flow
throughout the process; further, for all $e\in f_k$, the flow $f(e)$ is
a cheapest flow of its size in $G$ with respect to the final cost vector as well.
In the following we consider the cost vector $\vecy$ at the end of the process.

\begin{figure}[ht]
\begin{center}
\include{graphic-arxiv} 
\caption{The graph of $C_\vecy(x)$
}\label{graph}
\end{center}
\end{figure}

Let $f^*$ be the average of all tight flows, i.e., set
$$
f^*=\frac{1}{n_k}\sum_{e\in f_k}f(e).
$$
Let $q$ be the size of $f^*$; we have $k\le q\le k+1$. Note that
$f^*$ is a cheapest flow of size $q$ by the second statement of
lemma~\ref{lem:c(x)}, as it is a convex combination of cheapest flows of size
$k$ and cheapest flows of size $k+1$. Further, since $e\notin f(e)$
for any $e\in f_k$, the amount of flow that passes through each edge
$e$ in $f^*$ is strictly less than $1$. Thus, for a sufficiently
small $\epsilon>0$, flow $f^*+\epsilon f_k$ is a valid flow of size
$q+\epsilon k$ in $G$. Moreover, we have $C_\vecy(q+\varepsilon
k)\leq \vecy(f^{*}+\epsilon f_k) = C_\vecy(q)+\varepsilon
C_\vecy(k)$.

This observation, together with the convexity of $C_\vecy(x)$,
allows us to derive that $C_\vecy(x)$ is a linear function on the interval $[0,k+1]$.
Indeed, by convexity of $C_\vecy(x)$ we have
            \begin{eqnarray*}
            C_\vecy(k)\leq &\frac{q+(\varepsilon-1)k}{q+\varepsilon k}C_\vecy(0) +
            \frac{k}{q+\varepsilon k}C_\vecy(q+\varepsilon k)&=
            \frac{k}{q+\varepsilon k}C_\vecy(q+\varepsilon k) \\
            C_\vecy(q)\leq &\frac{\varepsilon k}{q+\varepsilon k}C_\vecy(0) +
            \frac{q}{q+\varepsilon k}C_\vecy(q+\varepsilon k)&=
            \frac{q}{q+\varepsilon k}C_\vecy(q+\varepsilon k)
            \end{eqnarray*}
If any of the two inequalities above is an equality, then $C_\vecy(x)$ is linear on
$[0,k+1]$ and we are done. Otherwise, both of these inequalities are strict,
and we can write
$$
C_\vecy(q+\varepsilon k)\leq C_\vecy(q)+\varepsilon C_\vecy(k)<
            \frac{q}{q+\varepsilon k}C_\vecy(q+\varepsilon k)+
            \varepsilon\frac{k}{q+\varepsilon k}C_\vecy(q+\varepsilon k)=
C_\vecy(q+\varepsilon k).
$$
The contradiction shows that $C_\vecy(x)$ is a linear function on $[0,k+1]$.

Since $\vecy$ satisfies conditions~(1) and~(2) in the definition of $\mu(\vecc)$,
we obtain
$$
\mu(\vecc)\geq C_\vecy(k)=k(C_\vecy(k+1)-C_\vecy(k))=k(C_\vecc(k+1)-C_\vecc(k)),
$$
where the first equality follows from linearity of $C_{\vecy}(x)$ and last equality holds by construction of $\vecy$.
Thus, the lemma is proven.
\end{proof}

Recall now that $L(G,c)$ by definition is the longest path in $G^*(\vecc)$,
where $G^*=G^*(\vecc)$ is the cheapest $k+1$-flow w.r.t. $\vecc$. Hence,
$C(k+1)=\vecc(G^*)$. Further we can decompose $G^*$ into the sum of
a path with the cost $L(G,c)$ and some $k$-flow, which implies that
$C(k+1)\ge L(G,c)+C(k)$. Rewriting the last inequality we get
$C(k+1)-C(k) \ge L(G,c)$. Combining this observation with
Lemma~\ref{lem:k-path-upper} we easily derive
Theorem~\ref{thm:alpha-mu}.

\section{Proof of Lemma~\ref{lem:vc-lowerbound}}
Let $C_v$ be some clique of size $\rho_v$ in $\CL_v$, and consider
the bid vector $\vecb$ given by $b(u)=x$ if $u\in C_v$ and $b(u)=0$
if $u\in V\setminus C_v$. Since $C_v$ is a clique, any vertex cover
for $G$ must contain at least $\rho_v-1$ vertices of $C_v$. Thus, any cheapest
feasible set with respect to the true costs contains all vertices in
$C_v\setminus\{v\}$; let $S$ denote some such set.
Moreover, for any $u\in C_v\setminus\{v\}$,
any vertex cover that does not contain $u$ must contain $v$, so
$\vecb$ satisfies condition (2) with respect to the set $S$
in the definition of the benchmark $\nu$.
To see that is also satisfies condition (3), note that if any vertex
in $C_v\setminus\{v\}$ decides to raise its bid, it can be replaced by its neighbors
at cost $x$. Now, consider any $w\in (V\setminus C_v)\cap S$.
The vertex $w$ cannot be adjacent
to all vertices in $C_v$, since otherwise $C_v$ would not be a maximal
clique. Thus, if $w\in S$, we can obtain a vertex cover of cost
$x(\rho_v-1)$ that does not include $w$ by taking all vertices
of cost $0$ as well as all vertices in $C_v$ that are adjacent to $w$.

\end{document}